\newtheorem{lemma}{Lemma}[section]
\providecommand{\customgenericname}{}
\newcommand{\newcustomtheorem}[2]{%
  \newenvironment{#1}[1]
  {%
   \renewcommand\customgenericname{#2}%
   \renewcommand\theinnercustomgeneric{##1}%
   \innercustomgeneric
  }
  {\endinnercustomgeneric}
}
\crefname{section}{Section}{Sections}
\Crefname{section}{Section}{Sections}
\crefname{subsection}{Section}{Sections}
\Crefname{subsection}{Section}{Sections}
\crefname{subsubsection}{Section}{Sections}
\Crefname{subsubsection}{Section}{Sections}
\crefname{appsec}{Appendix}{Appendices}
\Crefname{appsec}{Appendix}{Appendices}
\newcommand{\parbold}[1]{\paragraph{#1}}
\let\cite\citep
\newcommand{\Comments}{1}
\newcommand{\mynote}[3]{\ifnum\Comments=1\textcolor{#1}{#2:#3}\fi}
\definecolor{DarkBlue}{rgb}{0.1,0.1,0.5}
\newcommand\blfootnote[1]{%
	\begingroup
	\renewcommand\thefootnote{}\footnote{#1}%
	\addtocounter{footnote}{-1}%
	\endgroup
}
\newcommand\bbN{\ensuremath{\mathbb{N}}}
\newcommand\mV{\ensuremath{\mathcal{V}}}
\newcommand\mB{\ensuremath{\mathcal{B}}}
\DeclarePairedDelimiter\ceil{\lceil}{\rceil}
\DeclarePairedDelimiter\floor{\lfloor}{\rfloor}
\DeclareUrlCommand\email{\urlstyle{rm}}
\DeclareMathOperator*{\argmax}{arg\,max}
\title{Combatting Gerrymandering with Ranked Choice Voting: An Experimental Analysis of Multi-member Districts in the United States}
\author[1,2]{Nikhil Garg}
\author[3]{Wes Gurnee}
\author[4]{David Rothschild}
\author[1]{David Shmoys}
\affil[1]{Operations Research and Information Engineering, Cornell}
\affil[2]{Cornell Tech}
\affil[3]{Operations Research Center, MIT}
\affil[4]{Microsoft Research}
\date{\today}
\begin{document}
\maketitle

\begin{abstract}
Every representative democracy must specify a mechanism under which voters choose their representatives. The most common mechanism in the United States -- winner-take-all single-member districts -- both enables substantial partisan gerrymandering and constrains `fair' redistricting, preventing proportional representation in legislatures. We study the design of \textit{multi-member districts (MMDs)}, in which each district elects multiple representatives, potentially through a non-winner-take-all voting rule. We carry out large-scale empirical analyses for the U.S. House of Representatives under MMDs with different social choice functions and algorithmically generated maps optimized for either partisan benefit or proportionality. Doing so requires efficiently incorporating predicted partisan outcomes -- under various multi-winner social choice functions -- into an algorithm that optimizes over an ensemble of maps. We find that with three-member districts using Single Transferable Vote, fairness-minded independent commissions would be able to achieve proportional outcomes in every state up to rounding, \textit{and} advantage-seeking partisans would have their power to gerrymander significantly curtailed. Simultaneously, such districts would preserve geographic cohesion. Through simulation, we find that the insights are robust to cross-party voting. In the process, we advance a rich research agenda at the intersection of social choice and computational gerrymandering.

\blfootnote{\email{ngarg@cornell.edu}, \email{wesg@mit.edu}, \email{david@researchdmr.com}, \email{david.shmoys@cornell.edu}. The authors thank FairVote and Moon Duchin (and others in the Data and Democracy lab) for helpful discussions. NG was supported by NSF CAREER IIS-2339427, and Cornell Tech Urban Tech Hub, Google, Meta, and Amazon research awards.}
\end{abstract}

\section{Introduction}
\begin{quote}
    This bill requires (1) that ranked choice voting \dots be used for all elections for Members of the House of Representatives, (2) that states entitled to six or more Representatives establish districts such that three to five Representatives are elected from each district, and (3) that states entitled to fewer than six Representatives elect all Representatives on an at-large basis---\cite{fra}. 
\end{quote}

\noindent The Fair Representation Act (the FRA), first introduced in 2017 and reintroduced since,\footnote{\url{https://beyer.house.gov/news/documentsingle.aspx?DocumentID=5276}} would mandate the use of multi-member districts (MMDs) to elect members to the United States House of Representatives, i.e., having fewer, larger districts each with multiple representatives. The American Academy of Arts and Sciences in 2020 released a report advocating for states to use multi-member districts  -- however, ``on the condition that they adopt a non-winner-take-all election model'' \citep{reportamericansciences_2020}. Despite the popular focus on single-member district (SMD) elections, such MMDs have a long history in the United States, especially at the state and local levels. In 1962, 41 state legislatures had MMDs, often with winner-take-all models \citep{niemiCandidaciesCompetitivenessMultimember1991}; as of 2021, 10 state legislatures elect representatives for at least one chamber in such a manner -- Arizona, for example, has two-member districts, where each voter is given two votes and the top two vote-getters in each district are elected \citep{ballotpedia_2021,ballotpedia_2021stateleg}, and there is a debate in Maryland about whether to keep its (winner-take-all) MMDs. City councils, state parties, and other organizations often adopt more sophisticated techniques, using variations on Ranked Choice Voting (RCV) to elect multiple winners from each of several districts \citep{fairvoteorgrcv_2021}.

This paper considers the design of such multi-member districts and, in the process, advances a rich research agenda at the intersection of two well-studied aspects of the design of representative democracies: partisan gerrymandering, and social choice for multiple winners. The gerrymandering literature largely (but not exclusively) assumes winner-take-all voting in single-member districts (SMDs) and, given a fixed set of voters, studies how to divide the voters into districts such that the set of winners across districts satisfies desirable properties (e.g., for a gerrymandering party, maximizing the number of winners belonging to their party; for a neutral rule-maker, devising rules to ensure `proportionality,' such that the fraction of winners of each party matches the fraction of voters). The theoretical social choice literature, on the other hand, primarily considers a single district and studies voting rules (functions from voter rankings to a set of winners) such that the set of winners in that district satisfies similar properties, including proportionality. A map consisting of multiple MMDs (e.g., 30 two-member districts in Arizona) requires \textit{both} partitioning voters into districts \textit{and} devising rules for how each district collects and aggregates votes. The challenge cannot be decomposed: effectively drawing districts depends on the social choice function, and the same social choice function has different effects depending on district size and composition.

We study this joint challenge, carrying out large-scale empirical analyses for the U.S. House of Representatives under MMDs with different social choice functions, under maps that could be drawn by either partisan gerrymanderers or independent commissions.
We show that indeed the choices of (how many and which) districts to draw and which voting method to use should not be made independently. For example, while MMDs with winner-take-all elections are commonly believed to be discriminatory against minorities when compared to SMDs \citep{niemiImpactMultimemberDistricts1985,derfnerMultimemberDistrictsBlack72,bullock1993changing}, we empirically find that MMDs with more proportional voting methods support political minorities. We further find that `interior' solutions in this joint optimization -- with neither SMDs nor one large MMD -- often best balance the multiple objectives promoted by good governance groups, such as proportionality, competitiveness, and geographic district size. In more detail, our contributions and findings are as follows.

\parbold{Methodological contributions. } We provide a scalable, empirical methodology to algorithmically study partisan gerrymandering and fair redistricting under multi-member districts. We leverage an efficient way to calculate partisan outcomes under Single Transferable Vote (a common variant of RCV for multiple winners) and extend an  algorithm by \citet{gurnee2021fairmandering} to efficiently calculate near-optimal multi-member maps, under a variety of objective functions and social choice rules. We further develop an individual-level, voter-file-based methodology to study party crossover and intra-party effects of ranked choice voting.
Our results, using voting and individual-level data from across the United States, further illustrate how theoretical social choice guarantees heterogeneously translate to practice, away from the asymptotic regimes in which they are often studied.
We believe that this work and methodological approach can be applied beyond the context of multi-member districts for legislatures in the United States, for questions regarding the design of representative democracies at the intersection of gerrymandering and social choice, across computer science, optimization, and political science.

\parbold{Application-oriented findings. } We apply our methods to consider the design of MMDs empirically in the House of Representatives in the United States. We primarily study \textit{proportionality},\footnote{In this work, we use ``fair'' and ``proportional'' interchangeably. There are many other redistricting desiderata in the literature, which we briefly discuss in the related work and also show results for in \Cref{sec:proportionality}. Similarly, we refer to independent commissions as one group who may desire proportionality, though in practice such commissions have other possible objectives.} as a function of the number of districts and the voting rule used. Surprisingly, our results indicate that two-member or three-member districts with proportional voting rules are often sufficient to prevent the worst gerrymanders by an adversary and to enable independent commissions to find a proportional map -- even small MMDs are enough to eliminate ``natural'' gerrymandering, in which geography and the distribution of citizens lead to a natural advantage for one party, and purposeful ones.  This result holds with no other optimization constraints besides district contiguity and continues to hold when voters behave noisily and potentially vote across parties. On the other hand, we find that poor design constraints (in \textit{either} the number of districts \textit{or} the voting rule used) enable more extreme gerrymanders than is possible with SMDs.

These effects occur because MMDs with appropriate social choice functions protect concentrated (political) minorities by making packing and cracking more difficult, and support diffuse minorities by joining them in the same large multi-member district, with a lower threshold to win a seat. To underscore the practical relevance of our methods, we analyze the specific proposal in the Fair Representation Act and show that it achieves an effective balance in the MMD design space---though, as emphasized above, its prescribed three- to five-member districts are larger than are necessary; this part of the proposal could be relaxed without sacrificing its proportionality goals.

Finally, in the appendix, we study how MMDs affect \textit{intra-party} diversity, measuring two competing claims: that small SMDs protect diverse, geographically-correlated interests within a party (e.g., urban, Black Democrats versus suburban, white Democrats), and conversely that RCV with large MMDs enables ideological differences within a party (cf. \cite{fairvoteorgrcv_2021,page_gilens_2018}). We find, under reasonable assumptions on how voters rank candidates within a party, that two- or three-member districts preserve geographic compactness, and thus the notion that legislators \textit{represent} a geographically cohesive set of voters, while increasing the partisan diversity of the winning set---and thus striking a balance between the competing claims.

\Cref{sec:modelmethod} presents the model and methods. \Cref{sec:proportionality} contains our main results, regarding inter-party effects of multi-member districts. \Cref{sec:newwithoutsolidcoalitions} extends the results to a setting in which voters behave noisily and may vote across parties. We conclude in \Cref{sec:discussion}. Appendix \ref{sec:intra} considers \textit{intra-party} effects. %
Our code is available at \url{https://github.com/nikhgarg/gerrymandering_mmd}.

\subsection*{Related work}
This work is at the intersection of two rich literatures, showing how joint optimization over how voters are split into districts (gerrymandering) and how they vote within a district (social choice) yields outcomes that neither could achieve separately. It further contributes to the literature studying multi-winner districts in the United States. Given each field's vast literature, we focus on the most relevant work.%

\parbold{Redistricting.} Gerrymandering is the practice of using district boundaries to engineer electoral outcomes by ``packing'' and ``cracking'' voters within and between districts \citep{issacharoff2002gerrymandering, erikson1972malapportionment, mcgann2016gerrymandering}. Due to the status quo rules in the United States, past computational research has primarily studied gerrymandering and redistricting more broadly in the context of a fixed number of single-member districts with winner-take-all elections.

Optimal political districting is well known to be an \textit{NP}-hard computational problem \citep{puppe2008computational, kueng2019fair, chatterjee2019partisan,van2015network,dyer1985complexity,lewenberg2017divide}. Therefore, to study redistricting, researchers have developed ensemble methods \citep{Deford2021recombination, liu2016pear, chen2013unintentional, autry2020multi, fifield2020automated, mccartan2020sequential, gurnee2021fairmandering, autry2021metropolized} that generate huge quantities of legal district plans to explore the exponentially large space of feasible maps.

Such techniques are commonly used to detect gerrymandering \citep{chikina2017assessing, duchin2018outlier, duchinHomologicalPersistenceGerrymandering2020, herschlag2017evaluating, herschlag2020quantifying} or to study the impact of different redistricting rules on the distribution of outcomes \citep{deford2019redistricting, deford2020computational}. An important result is that even under `neutral' maps drawn without regard to underlying political geography, the natural geographic segregation among partisan voters can create skewed political outcomes \citep{borodin2018big, chen2013unintentional}. More recently, \citet{DuchinSchoenbach2022371393} find that plurality single-member districts are compatible with proportionality in many states, if used as a target for redistricting.

There are many proposed metrics of fairness in redistricting, including: the efficiency gap \citep{stephanopoulos2015partisan}, the mean-median gap \citep{wang2016three}, partisan-symmetry \citep{warrington2018quantifying}, competitiveness \citep{deford2020computational}, and most simply, proportionality. Such metrics can be used as the objective function of an optimization algorithm to generate district maps with (un)fair outcomes \citep{swamyMultiobjectiveOptimizationPolitical, gurnee2021fairmandering, king2015efficient, cannon2020voting}. However, because of partisan segregation, in some states creating a proportional plan with SMDs is actually impossible \citep{duchin2019locating}, motivating the use of alternative election procedures. In this work, we primarily consider proportionality as our fairness notion.

Methods to restrict gerrymandering are also commonly proposed. \citet{borodin2018big} find that compactness constraints do not effectively prevent gerrymandering. Inspired by the fair division literature, researchers have also recently proposed methods in which two adversarial parties together draw a map \citep{benade2021you,benade2020abating,pegden2017partisan,tucker2018cut}.

\parbold{Social choice.} The social choice literature, by contrast, typically fixes a set of voters and a number of winners and studies how voting rules (how preferences are elicited and aggregated) affect outcomes~\citep{brandt2016handbook,arrow2010handbook,dewanPoliticalEconomyModels2011}. Here, we focus on the literature on multi-winner elections. %
The theoretical social choice literature considers properties of voting rules, often for arbitrary distributions of voters~\citep{caragiannisSubsetSelectionImplicit2017,lacknerQuantitativeAnalysisMultiwinner2018,azizJustifiedRepresentationApprovalbased2017,elkindPropertiesMultiwinnerVoting2015,conitzerParadoxesMultipleElections,azizComputationalAspectsMultiwinner2015,garg2019your,chengGroupFairnessCommittee2019,fishburn2018approval}. Most relevant is \citet{skowronProportionalityDegreeMultiwinner2019}, which studies the proportionality guarantees of various multi-winner voting rules, including Thiele rules; they find Proportional Approval Voting (a type of Thiele rule) to be the most proportional with respect to their measure, though other Thiele methods may outperform it on objectives beyond proportionality. The optimality of PAV and the parametrizable nature of the class motivates us to focus on Thiele rules in this work, and our results complement the theoretical work in comparing such rules in non-asymptotic settings with empirical voter distributions. Related to our intra-party analysis, \citet{elkind2017multiwinner} simulate the properties of multiwinner voting rules (such as STV and PAV) when voters lie on a two-dimensional Euclidean space. 

On the applied side, a strand of the literature studies the effects of voting reforms, especially on minority voters.~\citet{mcdanielDoesMoreChoice2018} finds that RCV increased racially polarized voting in comparison to runoff elections.~\citet{mcgheeHasTopTwo2017} find mixed evidence for whether top-two primaries have reduced polarization, and~\citet{rogowskiPrimarySystemsCandidate2015} find no evidence that open versus closed primaries affect candidate ideology;~\citet{groseReducingLegislativePolarization2020} finds the opposite on both points.~\citet{spencerEscapingThicketRanked2015} advocate for ranked choice voting as a Voting Rights Act remedy, supporting its legality.

\parbold{Multi-member districts in the United States.} As chronicled by~\citet{klain1955new} in 1955, there is a rich history of multi-member districts in the United States in state legislatures and city councils. In 1962, for example, 30 states used MMDs to elect state senators, and 41 states used them to elect state representatives, mostly in two- to four-member districts with winner-take-all procedures \citep{niemiCandidaciesCompetitivenessMultimember1991}. Historically, at-large city council members have often also been elected with equivalent procedures. On the other hand, since 1967 federal law has banned the use of MMDs to elect members to the U.S. House of Representatives, a status quo the Fair Representation Act would repeal.

The political science community has studied these districts, often focusing on their effect on race \citep{silva1964compared,dauerMultimemberDistrictsDade1966,banzhafMultimemberElectoralDistricts1966}, though some have questioned the optimal district size  \citep{hamiltonLegislativeConstituenciesSingemember1967}. In particular, while the empirical evidence is mixed due to the observational nature of analysis, the predominant view is that winner-take-all MMDs harmed racial minorities, especially in Southern states~\citep{niemiImpactMultimemberDistricts1985,derfnerMultimemberDistrictsBlack72,bullock1993changing}. Similarly,  winner-take-all at-large city council elections are thought to dilute the votes of minorities~\citep{stillVoluntaryConstituenciesModified1991,bullock1989symbolics,walawenderAtlargeElectionsVote1985,trounstineContextMattersEffects2008,davidsonAtlargeElectionsMinoritygroup1981}. As theoretical models \citep{gerberMinorityRepresentationMultimember1998} elucidate, such districts with winner-take-all rules enable the majority -- even without substantially distorting district lines as might be necessary with SMDs -- to ensure that a minority even with 49\% of the vote elects none of its preferred candidates. Partially due to such evidence, by 1982 the majority of states had eliminated the use of MMDs for their state legislatures, and several cities have eliminated their at-large council seats under court mandate. \citet{lempert2021ranked} analyzes the legality of RCV and multi-member districts, and advocates for it as a court-ordered remedy to illegal gerrymanders.

Most related is work favorably comparing MMDs using ranked choice voting to SMDs for city councils. \citet{benadeRankedChoiceVoting2021} compare outcomes using RCV versus single-member districts in four empirical case studies, finding that it yields better representation for demographic minorities; key to their approach is constructing various ranking-based models for how voters prefer candidates, beyond a ``solid coalitions'' assumption in which voters prefer same-party candidates before other-party ones. From the same group,~\citet{lowell2019} and~\citet{chicago2019} propose using RCV with several MMDs instead of SMDs for the Lowell, MA and Chicago, IL city councils, respectively. Their work primarily considers non-partisan representation, such as race and other demographic characteristics -- these factors are more salient in municipal contexts, where such MMDs with RCV are more common today. \Cref{sec:newwithoutsolidcoalitions} and \Cref{sec:intra} explore similar questions with cross-party voting and intra-party effects, respectively, with complementary voter-file-based methods. More recently in subsequent and complementary work, the \citet{FRAanalysisDuchin} also analyzes the effects of the Fair Representation Act; they consider a more limited set of district sizes and focus on racial representation, as opposed to our focus on partisan effects.

To this literature, our work adds a systematic analysis of the partisan gerrymandering effects of such districts: providing a scalable methodology,  characterizing the effects on proportional representation under both adversarial gerrymandering and `fair' redistricting across the range of possible district sizes, and providing design recommendations for legislation. We further study how MMDs affect intra-party results on two dimensions, political preference and geography.

\parbold{International contexts and other political solutions.} Countries beyond the United States also use MMDs -- in Singapore, for example, the ruling party has been accused of using MMDs with winner-take-all rules to gerrymander (with boundaries determined immediately before the election) \citep{singaporegerry}. Empirically comparing elections across 81 countries, \citet{carey2011electoral} find that larger district sizes are correlated with higher proportionality; they find that multi-member districts of size four to eight exhibit much of the proportionality benefits and argue for them as the ``electoral sweet spot;'' our work suggests that in the United States, even smaller districts would suffice. \citet{horwill1925proportional} and \citet{taagepera1989seats} also highlight district size as an important factor in proportionality. \citet{lijphart2000patterns} compares majority and proportional representation systems internationally and highlights that two-member districts (with plurality voting) have been used in the United Kingdom, Canada, India, and Barbados, but not since 1970; Mauritius has continued to use three-member districts, using plurality voting.

Multi-member districts are far from the only possible governmental system to achieve proportionality or other related desiderata. Many countries have seat allocation systems designed to be proportional with multiple parties (see~\citet{pukelsheim2017proportional} for a survey). For example, party list systems (as in Israel and the Netherlands \citep{lijphart2000patterns}) allow voters to specify a party (potentially alongside specific candidates); each party above a threshold then receives an approximately proportional number of seats, with methods such as Jefferson/D'Hondt used to convert proportions to integral seats. Such systems and related ones have also been studied in the EconCS and Operations communities (e.g., \citet{cembrano2021multidimensional,cembrano2021proportional}). \citet{guney2018efficient} and \citet{bredereck2021strategic} analyze the margin of victory in proportional representation systems, which is related to our competitiveness analysis for multi-member districts.

Finally, we note that other reforms are also of interest in the United States, with other proposals to achieve similar goals (e.g., \citet{balinski2010fair,ernst1994appointment}). For example, \citet{bachrach2016misrepresentation} consider the {misrepresentation ratio} of single-member districting with various score-based ranking rules, in \textit{Electoral College}-type settings: when these rules are used to elect a winner within each district but then an overall winner is declared based on the majority vote of districts.

A comparative empirical analysis of such approaches---and especially which may be most politically feasible in various jurisdictions---is outside the scope of this work.

\section{Model and methods}
\label{sec:modelmethod}

\Cref{sec:probdef} formalizes the joint redistricting and social choice challenge, \Cref{sec:socialchoiceprops} introduces the social choice functions we analyze and shows how to efficiently compute per-party seat shares under them, and \Cref{sec:empiricalmethod} presents our empirical method.

\subsection{Problem definition}
\label{sec:probdef}

The task is to elect a legislature composed of $N$ seats. There is a population of voters $\mV$; each voter $v\in \mV$ lives in an \textit{atomic} block $b_v \in \mB$ and belongs to a party $p_v \in \{R, D\}$. The blocks $\mB$ are organized in an adjacency graph where two blocks share an edge when they are geographically adjacent to each other.

 \parbold{Multi-member redistricting.} In redistricting, the blocks $\mB$ are partitioned into $K$ geographically contiguous districts, where each district $k$ is allocated $N_k \in \bbN$ seats in the legislature such that $\sum_k N_k = N$, and $N_k\geq 1$. Each district also must be population balanced in accordance with $N_k$, such that the population ratio of district $k$ to the whole state is bounded between $\frac{N_k \pm \epsilon}{N}$ for population tolerance $\epsilon$.\footnote{    Our chosen tolerance provides a 1\% upper bound in population deviation. {More precisely, in the optimization, let $\tau=0.01$ and the single-seat ideal population be $P_{\text{ideal}}=P_{\text{total}}/N$. For a district with $N_k$ seats at tree depth $L$, we enforce $P_k \in [N_k P_{\text{ideal}} \pm (\tau P_{\text{ideal}}/L)]$, i.e., a relative deviation $\delta_k=\tau/(L N_k)$ from $P_{\text{ideal},k} = N_k P_{\text{ideal}}$, bounded above by 1\%. This modeling tolerance should not be interpreted as a legal threshold: for U.S.\ congressional redistricting, the constitutional standard is population equality ``as
 		mathematically equal as reasonably possible;'' the U.S. Supreme Court has struck down a plan with less than a 1\% deviation between the smallest and largest district \citep{aclu_redistricting_2001}.}
 } The resulting set of districts is referred to as a map or a plan.  Maps can be drawn to fulfill various objective functions. For example, a party can gerrymander a map to maximize the seats won by their party, or an independent commission can attempt to satisfy one or more notions of fairness.

 \parbold{Social choice function.} Each district runs a separate election to fill its $N_k$ seats, with each voter $v$ voting in the district $k(b_v)$ to which their block $b_v$ belongs. Each election is run according to a social choice function $F$ that determines what information each voter provides (in this work, either a set of approved candidates or a ranking over candidates) and how votes are aggregated to produce the $N_k$ winners in each district $k$.

 \medskip
 Together, the choice of map ($K$ contiguous districts with $N_k$ seats each) and social choice function $F$ define an election. Given data on voters and assumptions on how they rank candidates (detailed in \Cref{sec:socialchoiceprops}), the procedure determines the set of $N$ winners. (In our empirical optimization approach, we use a probabilistic approach to compute the \textit{expected} seat share based on the variance of past elections to model the heterogeneous political elasticity and noise among different constituencies.)

 \parbold{Evaluation metrics.} An election procedure can be evaluated based on the set of winners it produces. To measure outcomes across parties, we primarily consider \textit{proportionality}: how does the fraction of winners $w_p$ belonging to each party $p \in \{R,D\}$ (seat share) compare to the fraction of voters $y_p$ (vote share); the \textit{proportionality gap} is the difference $| w_R - y_R | $. The larger the gap, the more that the procedure favors one party over another. In \Cref{sec:intra}, we further consider \textit{intra-party} measures, i.e., how the election procedure influences within-party differences.

\parbold{Research questions.} We ask: \textit{how does the election procedure determine the induced outcomes, and what is the joint influence of the social choice function $F$ and the map?} Note that the various components of the election procedure may be determined by different actors with competing interests. The Fair Representation Act, for example, would mandate that $N_k \in \{3, 4, 5\}$ and a RCV-based social choice function be used. However, barring a separate mandate concerning independent commissions, in each state a partisan state legislature may still seek to draw maps most favorable to their party.\footnote{Of course, to the extent that these laws are enforced in court, maps have to be in accordance with the Voting Rights Act and any state-specific laws concerning partisan balance.} Thus, we in particular are interested in the following design question: \textit{how do various constraints on the election procedure design space (restrictions on $K$, $N_k$, or $F$) affect the range of outcomes possible}, under maps drawn by either partisan actors or independent commissions?

\subsection{Social choice functions, voter assumptions, and calculating seat shares}
\label{sec:socialchoiceprops}

We study our research questions empirically, analyzing outcomes under counterfactual election procedure designs. However, doing so requires converting historical voting data to expected electoral outcomes under various social choice functions and with hypothetical candidates. Further, outcomes must be efficiently computable, as our redistricting optimization algorithm (introduced in \Cref{sec:empiricalmethod}) calculates as a subroutine the seat share under a given district and social choice function.
 We now lay the groundwork for our empirical method, by introducing our social choice functions and assumptions and showing how to efficiently compute per-party seat shares in a district.

\parbold{Social choice functions considered.} We consider two well-studied classes of multi-winner social choice functions: Thiele rules and Single Transferable Vote (STV). Variants of STV are used in elections in Ireland, Australia, Scotland, Malta, and locally in the United States \citep{electoralreformsociety_stv,bowler2000elections,fairvoteorgrcv_2021,tideman1995single}. PAV, while not as widely used in practice, is well studied in the literature, due to its strong theoretical guarantees \citep{azizJustifiedRepresentationApprovalbased2017,lacknerQuantitativeAnalysisMultiwinner2018,skowronProportionalityDegreeMultiwinner2019}.

 A Thiele rule is characterized by a function $\lambda$. Each voter $v$ in district $k$ provides a set $S_v$ of the $N_k$ candidates they like most. The set of winners is determined as follows. Consider a potential set of winners $C$. The amount of points that voter $v$ contributes to $C$ is $\sum_{i = 1}^{| S_v \cap C|} \lambda(i)$, and the set $C$ with the most points across voters (after tie-breaking) is selected. The (non-increasing) function $\lambda$ establishes that votes have diminishing returns; the $i$th approved candidate in the set is worth $\lambda(i)$. For example, winner-take-all approval voting is a Thiele rule with $\lambda(i) = 1$: for each voter, the set $C$ receives a number of points equal to the number of candidates in the set that the voter likes, and so the $N_k$ candidates with the most votes win. We further consider \textit{Proportional Approval Voting} (PAV), a Thiele rule  with $\lambda_{\text{PAV}}(i) = \frac{1}{i}$; and \textit{Thiele Squared}, with $\lambda_{\text{TS}} = \frac{1}{i^2}$. PAV is well-studied in the theoretical social choice literature and comes with optimality guarantees (in terms of proportionality) \citep{skowronProportionalityDegreeMultiwinner2019}. Thiele Squared induces even sharper diminishing returns than PAV in a voter having more of their preferred candidates winning, and thus prioritizes more voters having at least one approved candidate (and 50/50 outcomes between two parties, regardless of underlying vote shares). More generally, Thiele rules both contain common voting rules and parameterize a rule's tendency to favor proportional or balanced outcomes.

  In Single Transferable Vote (STV), a type of ranked choice voting for multiple winners, each voter instead submits a ranking over candidates (here, we assume the rankings are complete, not partial). Consider a district with $N_k$ seats and $V_k$ voters; the set of winners is created iteratively, as follows. The number of first-place votes for each candidate is counted. Any candidate with a number of votes at least the ``Droop'' threshold $Q = \floor{\frac{V_k}{N_k + 1}} + 1$ is selected as a winner, and their surplus votes (number of votes minus $Q$) are transferred to each voter's next preference. If no candidate has enough votes, then the candidate with the least number of votes (with tie-breaking) is eliminated, and their votes are transferred. The process continues until $N_k$ candidates have been selected, or the number of remaining candidates equals the number of remaining seats. STV is commonly used in practice in multi-winner elections.

  Details can differ in how such votes are transferred, but our results in \Cref{sec:proportionality} do not depend on the transfer rule; in our simulations in \Cref{sec:newwithoutsolidcoalitions,sec:intra}, we use the ``fractional'' STV rule, in which each voter keeps a fractional number of votes equal to their share of the winning candidate's surplus votes (we describe this formally in \Cref{sec:newwithoutsolidcoalitions}). For consistency, for all rules we assume that ties are broken in favor of candidates from party $D$, but randomly within each party.\footnote{Tie breaking is assumed for theoretical precision. It does not play a role in our empirical analysis, given the data. }

\parbold{Voter assumptions.} Next, we require assumptions on how voters rank or approve candidates.
For each Thiele rule, we assume that in each district $k$ there are \textit{exactly} $N_k$ candidates from each party, and each voter simply approves all the candidates in their party.\footnote{For example, with SMDs, each party typically advances exactly 1 candidate, chosen via a primary.}
For STV, we assume that in each district $k$ there are \textit{at least} $N_k$ candidates from each party.\footnote{For approval voting rules (such as the Thiele rules), having more candidates from each party than the number that a voter is allowed to vote for would induce stronger strategic concerns and might not lead to proportionality, even with solid coalitions. For example, suppose there are 2 candidates from each party in a district with 1 seat; then, since each voter can vote for only 1 candidate, the party with a majority of the votes might not win the seat if its voters split their votes between the two candidates; for this reason, in standard elections with 1 winner and 1 vote per voter, there is a primary for each party to decide their candidate. For ranked voting rules such as STV in which the voter can rank all candidates, having more candidates than seats does not create such concerns, since voters can rank all candidates in their preferred order.} For our primarily empirical analysis, we further assume that parties are \textit{solid coalitions} \citep{dummett1984voting}: that each voter ranks all candidates of their party over each candidate of the other party, though intra-party rankings may vary.
These assumptions reflect that party membership is a strong predictor of the party of the candidate for which a voter votes. Such an assumption, using historical party vote shares to predict vote shares in future elections, is used throughout the redistricting literature for SMDs; our additional assumption is that this behavior extends to partisan voting in multi-member elections. We relax this assumption in \Cref{sec:newwithoutsolidcoalitions} and study intra-party effects, where voters may disagree on within-party rankings, in \Cref{sec:intra}. %

\parbold{Calculating seat share given vote shares.} Suppose -- for a given district with $N_k$ seats and $V_k$ voters -- we know the fraction $y_p$ of voters that belong to each party $p \in \{R, D\}$ (vote share). How do we calculate the number of winners $n_p$ from each party for a given social choice function?

Given our assumptions (including \textit{solid coalitions}), it is straightforward to efficiently do so for Thiele rules, as candidates from a given party receive the same votes---and so we do not have to consider individual candidates, only the number of winners from each party directly. Then, we have that the party \textit{R} seat share is:
\begin{align}
n_R(y_R, \lambda)
&= \min_n \left[ \argmax_n \left[
y_R \sum_{i=1}^n \lambda(i) \right.\right. \notag\\
&\qquad\left.\left.
+ (1-y_R)\sum_{i = 1}^{N_k - n} \lambda(i)
\right]\right].
\end{align}

The inner $\argmax$ comes directly from the rule definition -- for a given $n$, suppose a fraction $y_R$ approves $n$ candidates. Then that will yield $y_R\sum_{i=1}^n \lambda(i)$ points, with the remaining fraction $1 - y_R$ approving $N_k - n$ candidates and yielding an additional $(1-y_R)\sum_{i = 1}^{N_k - n} \lambda(i)$ points. The outer $\min$ simply encodes tie-breaking in favor of party $D$.\footnote{Ties occur when $y_R\lambda(i+1) = (1 - y_R)\lambda(N_k-i)$ for some $i$.} The party \textit{D} seat share is $n_D(y_R,\lambda) \triangleq N_k - n_R(y_R,\lambda)$.

On the other hand, in STV, candidates of the same party may have different numbers of votes (in any round), as intra-party voter rankings may be arbitrary. One may thus, naively, believe that calculating party seat shares would require further assumptions on voter behavior (on voters' intra-party rankings) and carrying out the iterative procedure defined above, as the elimination and transfer scheme introduces substantial path dependence across candidates and rounds; in general with arbitrary voter rankings, such calculations are required. Doing so, with many voters and candidates, would be prohibitively computationally expensive as a subroutine in a redistricting optimization algorithm. The following proposition establishes that under our assumptions, party seat shares can be efficiently calculated (up to rounding) just from the vote shares, without dependence on voter rankings.\footnote{The first part of the result is a known property of STV (see, e.g.,~\citet{tideman1995single} and \citet{dummett1984voting}), under an assumption of ``solid coalitions'' as we have here. The second part also follows from, for our setting, the equivalence between PAV, d'Hondt's rule, and (up to rounding) STV. We provide a self-standing proof in the Appendix.}

\begin{restatable}[Seat shares under STV]{prop}{propSTVclosed}
Suppose -- for a given district with $M$ seats and $V \geq M(M+1)$ voters -- that a fraction $y_p$ of voters belong to each party $p \in \{R, D\}$, and that there are at least $M$ candidates per party.
Assume that each party's voters rank all same-party candidates above all other-party candidates, and ties are broken in party \textit{D}'s favor.

Let $n_R(y_R, STV)$ be the number of winners belonging to party $R$ using STV. Then, both $n_R(y_R, STV)$ and $n_R(y_R,\lambda_{\text{PAV}})$ are in $\{\floor{y_R M}, \ceil{y_RM} \}$.

\label{prop:stvequalspav}
\end{restatable}
At a high level, the proof establishes that no candidate receives meaningful (in terms of party vote share) votes from voters of the other party; in terms of party vote share, we can proceed with running two separate STV processes. Then, the sequential round path dependence does not matter, as the overall number of first-place votes for candidates of each party is invariant, and so the final partisan outcome depends just on initial vote shares. We note that the proof does not depend on our assumption of fractional STV; other transfer rules (such as random voter transfer) that preserve the number of surplus votes would induce seat shares according to the same formula.

Importantly for this work, \Cref{prop:stvequalspav} enables calculating party seat shares in a district that uses STV, without needing to (either computationally or in terms of data requirements) consider voters' individual rankings -- substantially simplifying the task for a (human or algorithmic) map drawer.\footnote{For precision, we further show that $n_R(y_R,\lambda_{\text{PAV}})$ is the unique integer such that
\begin{align}
y_R\cdot(M+1) - 1 \leq n_R(y_R, \lambda_{\text{PAV}}) < y_R\cdot(M+1),
\end{align}
and use this value.} While simulating STV is not computationally intractable for a single election, in our computational optimization below we need to run STV elections in each district as a subroutine to map generation and optimization, in the process of optimizing over about $10^{12}$ maps for a single state; as explained next in \Cref{sec:empiricalmethod}, we leverage this result to draw gerrymandered and fair maps for STV. Note that equivalence between PAV and STV is not true in general \citep{faliszewski2019proportional}. We use the relationship for the setting in this work to transfer intuition from PAV to STV.

\paragraph{Discussion of the solid coalitions assumption.} A core rationale for ranked choice voting is that it allows elicitation of preferences beyond two-party identity; our primary methods, assuming solid coalitions, do not capture this rationale. However, we believe that the assumption is suitable for the goals of this paper, to study \textit{partisan} implications of MMDs in the United States. The U.S. has single-winner elections for the executive (president, governor) and so will tend toward two parties for those offices. While it is possible that the solid coalitions assumption substantially breaks for legislatures with MMDs, it is at least equally reasonable that the two-party structure approximately remains, with subparties nested within them (preserving two-party solid coalitions). That being said, formally extending results beyond solid coalitions is of substantial interest. The technical hurdle is finding an alternative assumption that (1) is believable for the political context, and (2) does not require simulating STV \textit{within the inner loop of a redistricting optimization}, which is computationally prohibitive at scale. In other words, studying gerrymandering with social choice as we do requires \textit{some assumption} to enable efficient calculation of seat shares in each district, without simulation of elections. For example, more recent work by \citet{FRAanalysisDuchin} uses the same assumption as we do when \textit{generating maps}, but then simulates ranked choice votes beyond solid coalitions (as we do in \Cref{sec:intra}) \textit{given} those maps; their results, focusing on racial representation, are similar to ours.

We believe that finding a more reasonable assumption than (approximate) solid coalitions -- that is still suitable for optimization -- for the entire United States is unlikely, but such assumptions could be reasonable for specific political contexts (i.e., awareness of possible third parties or coalition splits that would form within a state). We leave such analysis for future work after more RCV elections have been run in the United States. In \Cref{sec:newwithoutsolidcoalitions}, we relax this assumption by calculating outcomes -- on maps calculated by assuming solid coalitions -- on simulated voter preferences where it does not hold; our insights generalize to this setting. %

\subsection{Empirical method}
\label{sec:empiricalmethod}

We study our research questions empirically, in the context of the United States House of Representatives. At a high level, we proceed as follows, {separately} for each state. Given historical voting data, we algorithmically generate maps for each state, social choice function $F$, and district size $K$: the most gerrymandered map for each party (the map that, given the voter data and the social choice rule, would produce the most winners from that party), the map with the smallest proportionality gap, and a neutral ensemble of random maps. For each map, we then calculate our metrics of interest for the relevant social choice rule.
We detail steps of this process next. In \Cref{sec:newwithoutsolidcoalitions,sec:intra}, we simulate full STV elections to study cross-party and intra-party effects, detailing the relevant methodological differences there.

\parbold{Data.} To calculate partisan seat shares, we require vote shares in each atomic block that will be used to compose districts. For each multi-district state, we use the geography-matched precinct-level statewide election returns from multiple election data repositories \citep{mgggstates, VOQCHQ_2018, DVN/NH5S2I_2018, DVN/UBKYRU_2019, DVN/UUCWPP_2011, DVN/AWE39N_2011, DVN/KX0YGR_2011, DVN/AN00LH_2011, DVN/WYXFW3_2011} for elections since 2008, aggregated by \citet{gurnee2021fairmandering} (including presidential, senate, and statewide gubernatorial elections). These data include vote shares for each candidate in each precinct in a collection of state and national elections. We filter these results to just Republican and Democratic candidates and average the two-party vote share across elections at both the block-level and the state-level.\footnote{Precinct-level results do not have a 1-1 map to census tracts. For each tract, the two-party vote share is generated as an average of its constituent precincts weighted by population and area overlap. About 80\% of precincts are wholly contained in one census tract.} For the atomic geographic blocks, we use census tracts, with population counts provided by the 5-year American Community Survey estimates.
As a general caveat, given the infrequency of statewide elections, these averages are noisy and sensitive to the idiosyncrasies of each race, candidate, and political environment, as well as true underlying shifts over time.

For the analysis in \Cref{sec:newwithoutsolidcoalitions,sec:intra}, we further require individual-level characteristics that can be used to simulate voter rankings. There, we use a \textit{voter file} provided to us by a private election analytics company, with individual-level voter data: in each block, an anonymized list of voters, along with their (potentially modeled) demographics (including ethnicity, gender, census block of home address); each voter is further assigned a party (either modeled or ground truth in states with party registration) and is scored on several ideological dimensions through a mixture of methods, including survey modeling, ecological inference, and individual-level voting history.\footnote{Calculating accurate scores is a challenging task, but we note that the dataset originates from a prominent company whose scores are used by many state and national parties, campaigns, and outside groups.} We calibrate the voter file to match the calculated party vote shares, by sub-sampling the voters according to their assigned party; for replicability, we provide a subsample of this dataset (with noise added) in our code repository. %

\parbold{Generating maps.} To generate maps, we extend the stochastic hierarchical partitioning (SHP) algorithm by \citet{gurnee2021fairmandering}, which recursively samples subdivisions of a state to create a large random\footnote{These districts are drawn without partisan information and favor compact districts; however, the exact distribution is difficult to characterize given the nested hierarchical dependencies.} ensemble of population-constrained and contiguous districts. These subdivisions are organized into a tree of nested regions that implicitly encodes an exponential number of distinct maps. {We extend the original tree growth logic to guarantee the existence of a split with child nodes each able to produce appropriately sized districts for a given MMD size policy.} Each district (leaf node) is scored by expected party seat share for each social choice function,\footnote{This step for STV is enabled by \Cref{prop:stvequalspav}. Due to the equivalence up to rounding, we use the same maps for PAV and STV, optimized for PAV.} and the tree can be efficiently traversed with a dynamic program to select the most gerrymandered map, or the tree leaves can be gathered into a secondary integer program to select the districts that create the most fair plan (for arbitrary definitions of fairness).

To ensure robustness, we estimate the variance of district vote share across elections to calculate the expected seat share per district and optimize for this expectation, rather than simply optimizing for the number of strict majority seats. This prevents, for example, the algorithm from drawing districts in which the historical vote share was 50.1\% for one party and declaring them safe seats for that party. We note that this also empirically lessens the sensitivity to the \textit{solid coalitions} assumption, if historical variance correlates with crossover votes in STV.\footnote{We note that variance of vote share (how much an area swings between parties \textit{across} elections) is not the same as splitting ballots within an election, and so this sensitivity check is not exact. However, both variance across elections and splitting ballots within an election would cause vote leakage from a party.}

All details of the original Stochastic Hierarchical Partitioning (SHP) algorithm can be found in the paper by \citet{gurnee2021fairmandering} and associated appendices; we further use their same geographic and electoral data (see Appendix Table 2 of \citet{gurnee2021fairmandering}). Here, we discuss the relevant algorithmic parameters and necessary modifications; see code for further details.
The main algorithmic difference from the original SHP algorithm is that we adapted ours to generate multi-member districts. To do this, instead of parameterizing a sample tree node by a region $R$ and total number of seats $s$, we needed to also specify the number of districts that node contains. This is required because at an intermediate node, the number of districts is not immediately derivable from the total number of seats in that node because of ambiguity in the number of $N_k$ versus $N_k+1$ sized districts (and, analogously, in our experiments simulating 3-, 4-, or 5-member districts in a state, as in the Fair Representation Act). Therefore, the number of seats is used just to balance the population, and the number of districts is used for all other tree operations (sampling valid splits, maintaining balance, etc.).

For each state, let $N$ be the number of seats for that state in the House of Representatives. Then, we construct maps for that state with $K$ districts, for each of $K \leq N$ and within $\{2, 3, \dots 10\} \cup \{12, 14, 16, 18, 20 \} \cup \{23, 26, 29, \dots 53\}$, where $53$ is the largest number of seats for any state.\footnote{Selecting every other number starting at 10 and every third number starting at 20 is done for computational tractability.} For each pair (state, $K$), we sampled the root node $\left(\frac{1000}{K}\right)^{1.2}$ times and each internal node $\left(\frac{300}{K}\right)^{0.5}$ times. These constants were chosen to balance computational cost and optimization quality. We used random-iterative center selection with Voronoi-weighted capacity matching to sample region centers and sizes. All districts are population-balanced within a $1\%$ tolerance, as described in \Cref{sec:empiricalmethod}. Each of these ensembles was then scored, optimized, and subsampled to create a final distribution over partisan outcomes. For the subsample used to calculate the distribution of outcomes under ``neutral'' maps, we use $1500\sqrt{K}$ maps for each (state, $K$) pair. The number over which the ``optimal'' maps (most fair, most gerrymandered) are calculated is more complicated: these maps are implicit in the tree structure. The set of tree leaves (districts) can be used to compose an exponentially large number of maps -- we can optimize over all these maps without enumerating them through a dynamic program (for most gerrymandered maps) or integer program (for most fair maps). For each setting, there are between $10^3$ (smallest states) and $10^{12}$ (largest states, most districts) such implicit maps that are optimized over.

\paragraph{Comparison to other methods.} We adopt this approach rather than the standard recombination Markov chain algorithm \citep{Deford2021recombination} and related local update methods \citep{fifield2020automated, autry2020multi, mccartan2020sequential, cannon2020voting, autry2021metropolized} to more efficiently model both partisan and nonpartisan mapmakers under multiple social choice functions. Our approach generates just one district ensemble (set of districts as leaf nodes of the tree), that can then be combined into one of exponentially many valid maps (in number of districts), optimizing each objective under each modeling scenario; in contrast, a single neutral recombination map ensemble would be unlikely to capture the full range of potential partisan gerrymandering in every setting, necessitating the use of a separate biased chain optimized for each combination of party and voting rule and increasing the amount of computation by almost an order of magnitude.  Furthermore, in a policy setting with a variable number of districts (as in the Fair Representation Act), a single hierarchical sample tree produced by SHP can efficiently capture all possible combinations of district sizes. In contrast, recombination would either require running 29 separate chains for the 29 different combinations of three-, four-, and five-member districts to fill California's 53 seats under the Fair Representation Act rules (times each combination of party and social choice rule for the full experiment), or would require creating rules for how to combine and re-partition districts of varying sizes. For example, when studying the effects of the Fair Representation Act, \citet{FRAanalysisDuchin} considers just one combination of district sizes per state.

On the other hand, an advantage of recombination methods is that they sometimes come with guarantees of the sampling distribution over maps, while no such analysis is available for the SHP algorithm. However, in the context of SMDs, \citet{gurnee2021fairmandering} find that SHP finds more extreme maps than other methods do.

\parbold{Our experiments.}
Putting things together, for each generated map, we can simply calculate our metrics of interest. %
We carry out the process for each social choice function and each state with 2 or more seats in the House, fixing as $N$ the number of seats allocated to that state after the 2010 census. The number of districts $K$ is from $1$ (a large MMD with $N$ seats) to $N$ (only SMDs). For each $K$, the set of district sizes $\{ N_k \}$ is selected such that all districts differ in size by at most one seat,\footnote{Formally, let $j = \floor{\frac{N}{K}}$, and $L = N \text{ mod } K$. Then there are $L$ districts of size $j+1$ and $K-L$ districts of size $j$. For example, when $N=6$ and $K=3$, $\{N_k\} = \{2, 2, 2\}$.  When $N=7$, $\{N_k\} = \{2, 2, 3\}$.} except in the analysis of the Fair Representation Act, in which case we generate maps with districts of size either 3, 4, or 5.

Overall, generating districts and calculating the metrics utilized over 40 CPU-weeks of computation, not counting the full STV elections simulated in \Cref{sec:intra}.

\section{Results: Inter-party balance with multi-member districts}
\label{sec:proportionality}
\begin{figure*}[tb]
			\centering
	\includegraphics[width=1\linewidth]{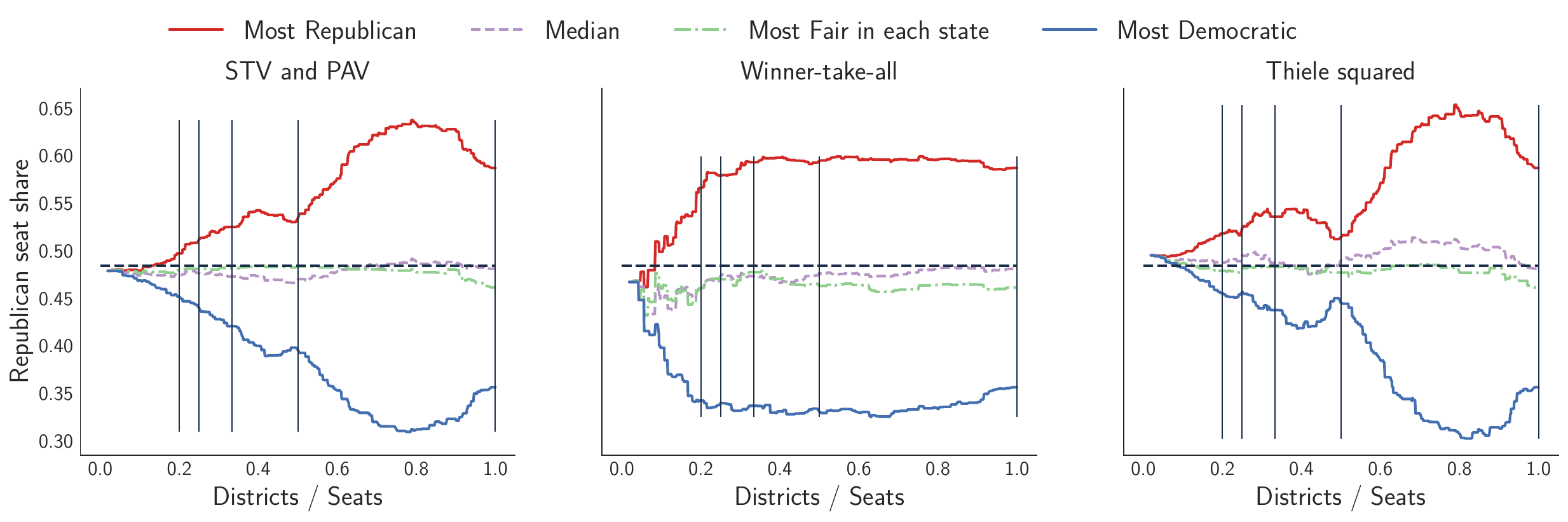}
	\caption{The Republican seat share \textit{over all states} as the number of districts is varied in each state. The horizontal line denotes the vote share fraction, i.e., the proportionality value. Each point aggregates over every state, with rounding and weighting by number of seats in the state, and vertical lines corresponding to when $N/K$, the average number of seats per district, is an integer. For example, the vertical line at $0.5$ corresponds to two-member districts in states with an even number of seats and all two-member districts except one one-member or three-member district in states with an odd number of seats. The right-most point is with SMDs, and the left-most point is if each state has one large MMD. ``Median'' refers to the median value found across random maps from the SHP algorithm, and ``Most fair in each state'' to the maps with the smallest proportionality gap. Overall, MMDs are effective at preventing the worst gerrymanders, especially with non-winner-take-all rules. Note that national seat share with Median and Fair maps only look invariant to district size because gaps cancel out between states (some favor Democrats, others Republicans). \Cref{fig:propbystate} shows that seat share is not invariant at the per-state level: even when optimizing for fairness, we require MMDs to reduce the per-state proportionality gap. }
	\label{fig:propdifferentmethods}
\end{figure*}

\begin{figure}[tb]
	\centering
	\begin{subfigure}[b]{0.42\textwidth}
		\centering
		\includegraphics[width=\textwidth]{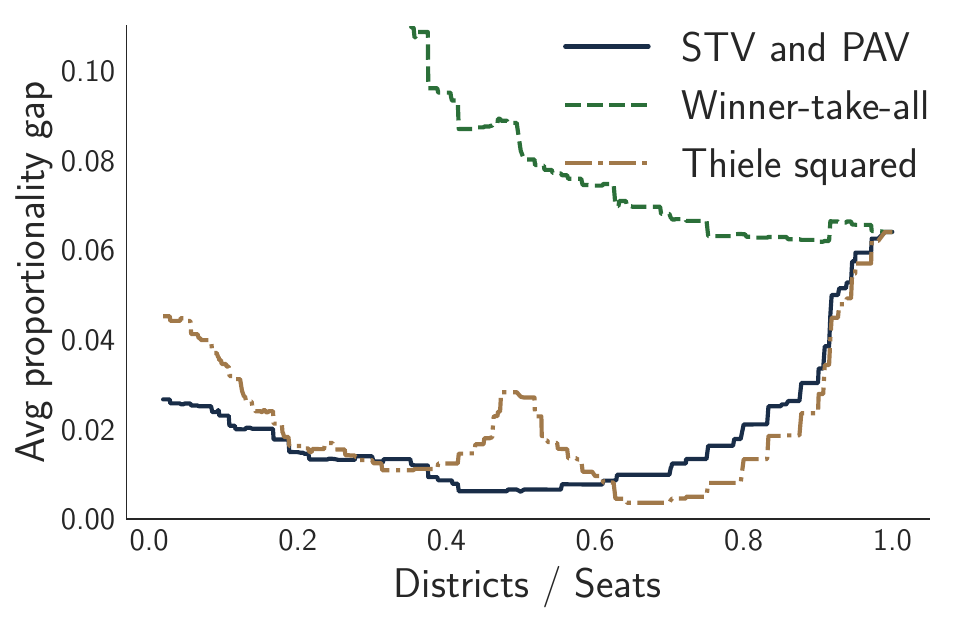}
		\caption{Avg per-state magnitude of gap in ``Most Fair'' maps}
		\label{fig:propmethods_mostfair}
	\end{subfigure}
	\hfill
	\begin{subfigure}[b]{0.56\textwidth}
		\centering
		\includegraphics[width=\textwidth]{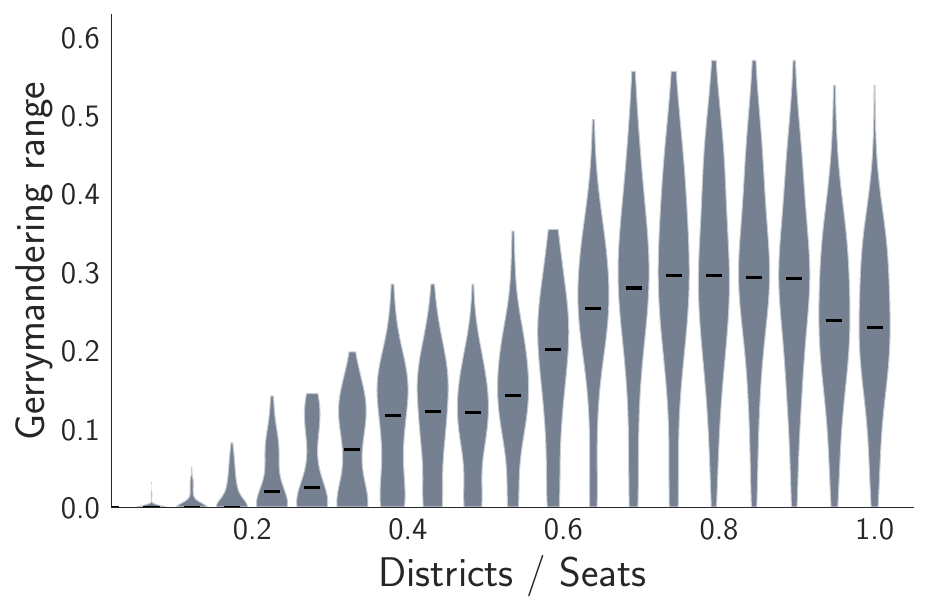}
		\caption{Gerrymandering range under STV and PAV}
		\label{fig:gerrymanderingrange}
	\end{subfigure}
	
	\begin{subfigure}[b]{0.56\textwidth}
		\centering
		\includegraphics[width=\textwidth]{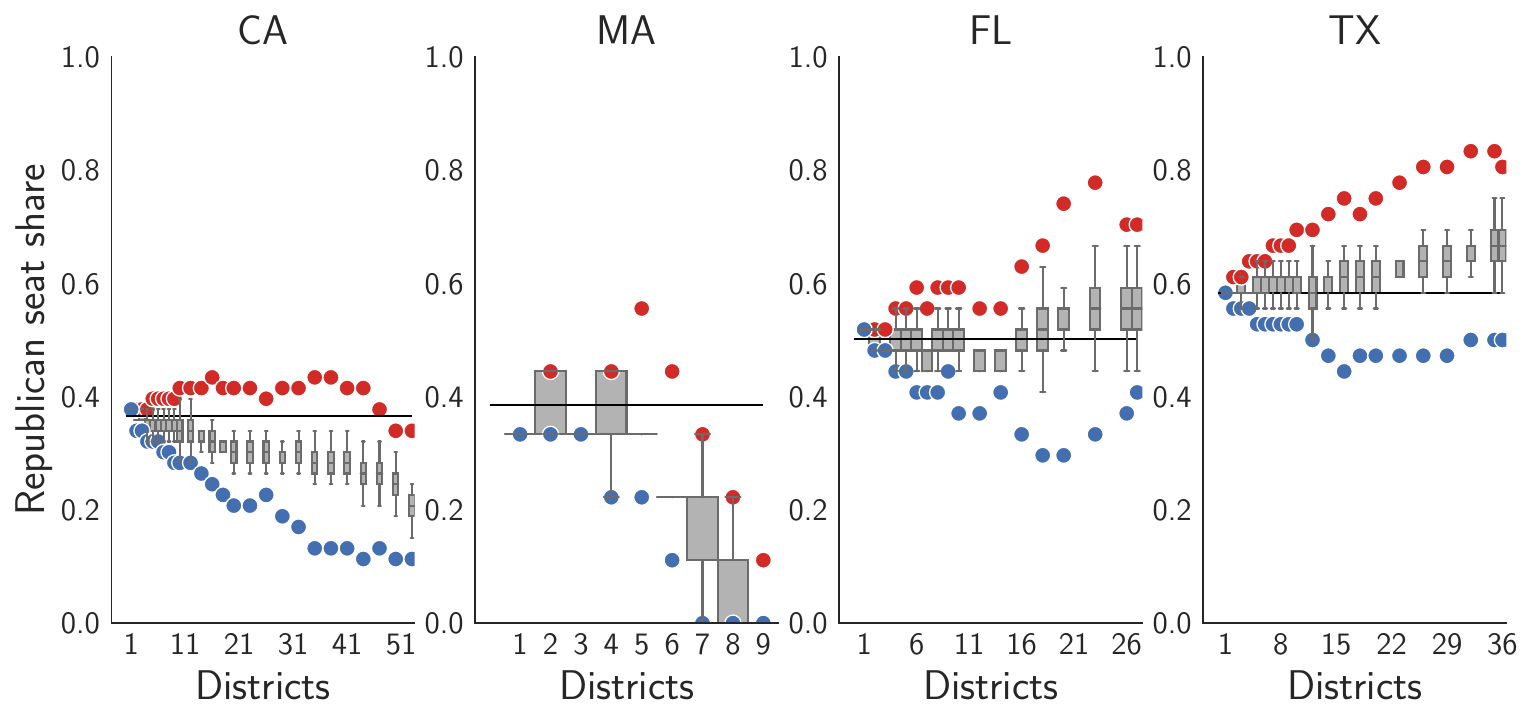}
		\caption{Partisan lean distribution under STV and PAV}
		\label{fig:propbystatebox}
	\end{subfigure}
	\caption{How the partisan lean and proportionality gap vary \textit{at the state level} with voting method and the number of districts. (a) The absolute value of the state-wise proportionality gap in the ``Most Fair'' map in each state. Even if a redistricting agent wanted to close the proportionality gap, it could not do so with SMDs. (b) The distribution of the ``gerrymandering range'' (difference in Republican seat share between extreme Republican-optimized and Democrat-optimized maps), as a measure of how much redistricting matters. Each violin plot shows the distribution over states. While the range varies substantially at the state level for SMDs, it shrinks across states as district size increases. (c) The per-state version of \Cref{fig:propdifferentmethods}, showing the full distribution of maps and the extreme gerrymanders for four states. While there are substantive state-level gaps with SMDs even in the most proportional maps, the gaps become negligible with even just two-member districts and STV. Note that in \Cref{fig:propdifferentmethods}, these gaps cancel out nationally, as some state-wise gaps favor Democrats and others favor Republicans. Qualitatively similar results to (a) are shown for the Median and gerrymandered maps in Appendix \Cref{fig:propmethods_median}, and plots for all states as in (c) are in Appendix~\Cref{fig:boxall}.}

	\label{fig:propbystate}
\end{figure}

We first analyze how MMDs affect partisan balance, both in how MMDs empower independent commissions and in how they constrain adversarial partisan gerrymanderers. \Cref{sec:mitiggerr} contains our main results; we illustrate the effects of each social choice function and district size, for both extreme gerrymanders and fair redistricting. In \Cref{sec:design}, we turn to \textit{design} questions, discussing the effect of the Thiele rule parameter and heterogeneity across states in terms of `optimal' district size; here, we further analyze how the Fair Representation Act navigates such tradeoffs.

\subsection{Gerrymandering, proportionality, and competitiveness with multi-member districts}
\label{sec:mitiggerr}
\Cref{fig:propdifferentmethods} contains our main result; for each social choice function, it shows how the overall (across states) partisan seat share varies with the number of districts. \Cref{fig:propbystate} zooms in on the state level.

\parbold{Preventing intentional gerrymanders.}  MMDs are effective at preventing the most extreme gerrymanders, but only with non-winner-take-all rules. As expected, in the limit, with one large MMD in each state and using STV (equivalently, PAV), the proportionality gap in each state is negligible. Perhaps more surprisingly, just moving from SMDs to two-member districts provides about half the benefit in terms of reducing the proportionality gap for the most extreme gerrymanders. Three-member districts further reduce this gap, e.g., from a maximal seat share of around 60\% to around 57\% for Democrats. With five-member districts, the range of outcomes is between 45\% and 50\% Republican seat share, even if one party controlled all redistricting nationally. To further understand the role of gerrymandering as district size increases, define the gerrymandering \textit{range} as the difference in Republican seat share between extreme Republican-optimized and Democrat-optimized maps; a small range suggests that outcomes would not vary much within the range of feasible maps. \Cref{fig:gerrymanderingrange} shows, for each district size, the distribution of the gerrymandering range across states. While the range varies substantially at the state level for SMDs, it shrinks across states as district size increases, showing that the effect appears both at the state and national level.

What explains these results? With SMDs, an \textit{R} gerrymander would draw a map such that as many districts as possible have a majority of \textit{R}s (up to a tolerance for robustness). It does so by \textit{cracking}, creating many districts in which the \textit{D} vote share is just below $\frac{1}{2}$ (thus electing an $R$ candidate), and, as necessary, \textit{packing}, creating a few districts in which the \textit{D} vote share is as close to 1 as possible. Such a map would maximize \textit{D} \textit{wasted votes}. Now, consider PAV, i.e., a Thiele rule with $\lambda(i) = \frac{1}{i}$, and two-member districts. Any district in which the \textit{D} vote share is between $\frac{1}{3}$ and $\frac{2}{3}$ would elect one member from each party. Cracking thus requires creating districts in which the \textit{D} vote share is less than $\frac{1}{3}$. Packing -- wasting \textit{D} votes -- could mean targeting the  \textit{D} vote share to be just \textit{below} $\frac{2}{3}$ (e.g., a \textit{D} vote share of $0.6$ to elect one member for each party), or close to $1$ but then giving up both seats in the district. In each case, gerrymandering requires more precision and wastes fewer votes for the opposing party.\footnote{Under our voting assumptions, for every PAV election with $N_k$ seats, $\frac{1}{N_k + 1}$ of the votes are wasted in total, and so the potential wasted votes for either party also goes to $0$ with district size.} Simultaneously, having fewer districts to draw reduces the degrees of freedom available. The corresponding bands become narrower as district size increases. 

The trend toward proportionality is not monotonic; a mixture of district sizes increases degrees of freedom, enabling gerrymanders even more extreme than those possible under SMDs. Consider one- and two-member districts; the gerrymandering party \textit{R} could then pack or crack party \textit{D} in the one-member districts, and use the two-member districts to elect one candidate from each party with \textit{R} vote share just above $\frac13$. Similarly, an urban gerrymandering party \textit{D} could waste fewer votes winning one-member urban districts, while still getting above $\frac13$ of the vote in two-member rural districts. (In other words, all but the most extreme two-member districts will result in one member elected from each party, with STV). The pattern repeats, to a lesser extent, between every integer division.

Similar arguments apply to other Thiele rules with strictly decreasing $\lambda$, and are amplified with sharper diminishing returns. Consider the Thiele squared voting rule. It favors -- even more than PAV -- 50/50 outcomes in each district, since voters who have fewer of their preferred candidates elected gain more from having an additional candidate elected than voters who already have many of their preferred candidates elected. Thus, when there are two seats per district, the Thiele squared rule strongly incentivizes outcomes where each party wins one seat in each district. In \Cref{fig:propbystate}, this increases the proportionality gap for Thiele squared voting rules when there are about two seats in each district, as 50/50 may not be proportional. %

\parbold{Enabling proportional redistricting.} The above analysis considers the most extreme gerrymanders. However, commissions that draw `fair' maps are increasingly common \citep{spencer_who_2020,cain2011redistricting}; we now show that MMDs enable the drawing of proportional maps that would not be possible under SMDs.
Consider \Cref{fig:propmethods_mostfair}, which shows the average absolute proportionality gap in each state for the map that minimizes this gap. We find that there is a substantive state-level gap that virtually disappears -- up to rounding with a fixed number of districts -- even with two-member districts and STV.

\Cref{fig:propbystatebox} further demonstrates the SMD `Massachusetts problem,' as elucidated by \citet{duchin2019locating}, and shows that MMDs fix it;\footnote{While \citet{duchin2019locating} find \textit{no} SMD in MA with a majority of Republicans, we find it possible to draw such a district. The difference is data. They choose to include only presidential and senate elections, and their data is from a different time period -- they find the effect in elections from 2000 to 2010. We also include statewide gubernatorial races where Republicans have had more electoral success, and so our averaging concludes that Republicans compose $40\%$ of the state. All redistricting approaches are sensitive to such data choices; results remain qualitatively the same, though details for any given election may differ.}  because of how evenly Republicans are distributed (diffused) across the state, there is no way to draw SMDs such that a proportional number of Representatives are Republican.  Intuitively, suppose party \textit{R} has vote share $\frac13$. With SMDs, to achieve proportionality a commission would need to draw districts such that party \textit{R} is in the majority in $\frac13$ of the districts, which may not be possible or may require atypical, contorted maps.

A single MMD with PAV provably, approximately achieves proportionality \citep{skowronProportionalityDegreeMultiwinner2019}, solving the Massachusetts problem up to rounding. Our results indicate that even multiple two-member districts enable fair maps and solve the problem in practice. Intuitively, in the above example, a commission would just need to draw a map with enough districts  with \textit{R} vote share above $\frac13$. Note that, by the pigeonhole principle, with overall party $p$ vote share $v_p$, for any map there will always be at least one district in which its vote share is at least $v_p$; so, as the threshold to win 1 seat decreases with district size, an ever-smaller minority party is guaranteed at least one seat.

Finally, we note that, as illustrated by the average proportionality gaps of the ``Median'' maps in Appendix \Cref{fig:propmethods_median}, even small multi-member districts with PAV or STV eliminate the issue of `natural' gerrymanders, in which even `neutral' maps -- drawn by a redistricting algorithm that ignores partisan vote shares -- favor one party, due to the natural geographic distribution of voters. Such natural gerrymanders have played a central role in discussion of a justiciable gerrymandering standard; our results indicate that with even small MMDs, an independent commission would not need to draw maps that substantially differ from neutral maps in order to achieve proportionality.

\begin{figure}
	\centering
	\begin{subfigure}[b]{0.24\textwidth}
		\centering
		\includegraphics[width=\textwidth]{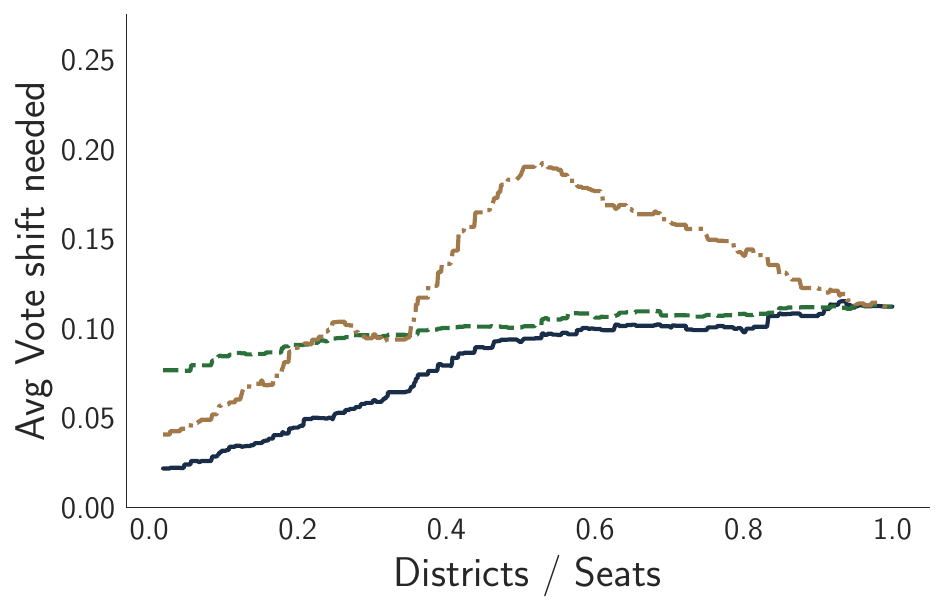}
		\caption{Median map}
		\label{fig:competmedian}
	\end{subfigure}
	\hfill
	\begin{subfigure}[b]{0.24\textwidth}
		\centering
		\includegraphics[width=\textwidth]{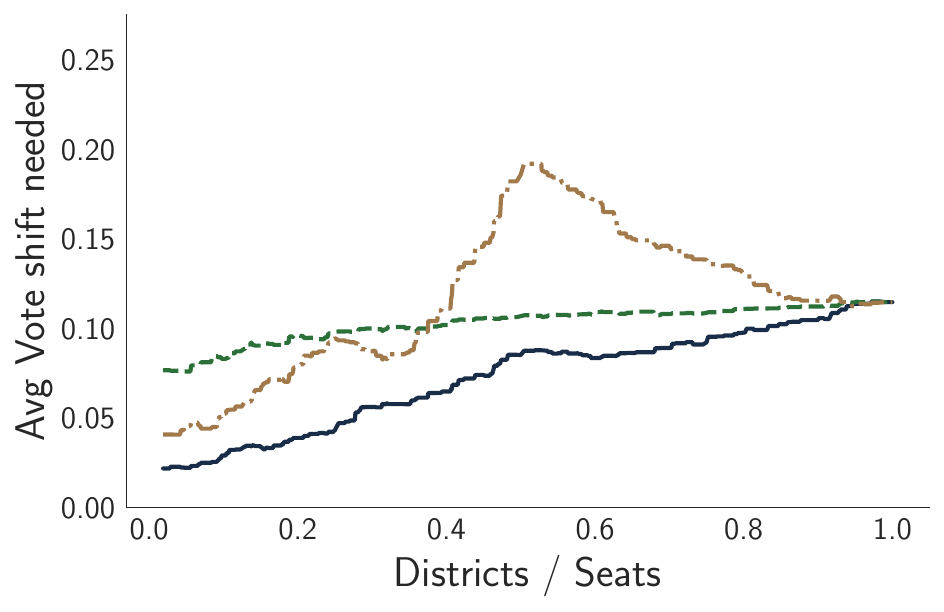}
		\caption{Most fair map}
		\label{fig:competmostfair}
	\end{subfigure}
	\begin{subfigure}[b]{0.24\textwidth}
		\centering
		\includegraphics[width=\textwidth]{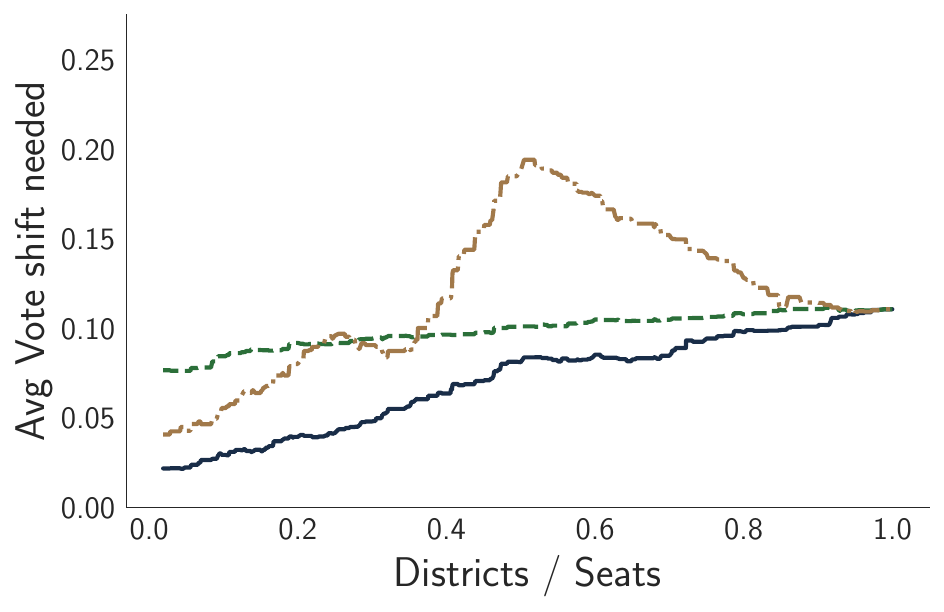}
		\caption{Most Republican}
		\label{fig:competmostrep}
	\end{subfigure}
	\hfill
	\begin{subfigure}[b]{0.24\textwidth}
		\centering
		\includegraphics[width=\textwidth]{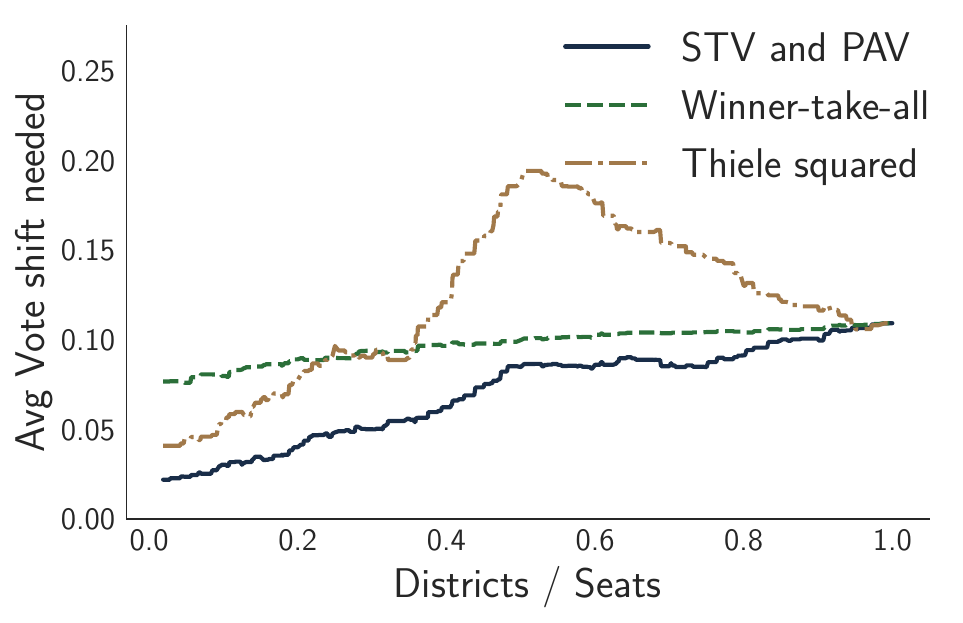}
		\caption{Most Democratic}
		\label{fig:competmostdem}
	\end{subfigure}
	\caption{The average vote shift needed in each map (averaged across districts in a map and across states) to shift the number of seats won by each party by at least one. The larger the vote shift needed in a district, the less \textit{competitive} is the district. The four plots show the various maps selected. For example, (a) shows the vote shift needed when the median (in terms of party seat share) maps are selected. With PAV/STV, competitiveness generally increases with district size.}
	\label{fig:competitiveness}
\end{figure}

\parbold{Competitiveness.}
Up to now, we have primarily considered \textit{proportionality} -- how the partisan seat share reflects the underlying vote share in each state. Here, we consider \textit{competitiveness}: the average vote shift needed to shift the number of seats won by each party by at least one. For example, in a winner-take-all election,  a Republican vote share of 0.6 in a district would lead to a margin of victory of 0.1. In a PAV election with two winners, the same district would have a victory margin of $(\frac{2}{3} - .6)$, as a Republican vote of $\frac{2}{3}$ (with appropriate tie-breaking) would lead to two elected Republicans as opposed to one. While competitiveness is controversial as a \textit{goal} for redistricting (as an objective when drawing maps) \citep{deford2020computational}, it is considered an important dimension along which to evaluate a map. Some claim that uncompetitive districts, for example, could depress participation or contribute to polarization \citep{altman2015redistricting,ainsworth2022district,cancela2016explaining,moskowitz2019reevaluating,gerber2020one}.

One potential concern with multi-member districts is that they may lead to uncompetitive districts if the number of members in each district is small. With two-member districts, for example, most districts will have one member from each party -- and one of the parties would need at least $\frac{2}{3}$ of the vote to win both seats in the district. Our results, however, indicate that this concern is overstated. \Cref{fig:competitiveness} illustrates the average vote shift needed to shift the seat share; regardless of which map is considered, with PAV or STV competitiveness increases with district size (the average vote shift necessary to change at least one seat goes down). {However, it is the case that most two-member districts would be split 1D-1R, even in what would have previously been considered politically safe regions.} This effect is amplified with Thiele squared voting rules, which further favors balanced outcomes. %

\subsection{Design recommendations and discussion}
\label{sec:design}
Above in \Cref{sec:mitiggerr}, we primarily discuss how partisan balance -- proportionality and competitiveness -- vary with respect to district size, generally across settings. Here, we expand on the role of the exact social choice function, as well as heterogeneity across settings when it comes to the effect of district size. We note that our results -- especially the non-monotonicity, heterogeneity, and non-asymptotic behavior in district size -- underscore the importance of an empirical approach to supplement theoretical analysis.

\begin{figure}
   \centering
   \includegraphics[width=\linewidth]{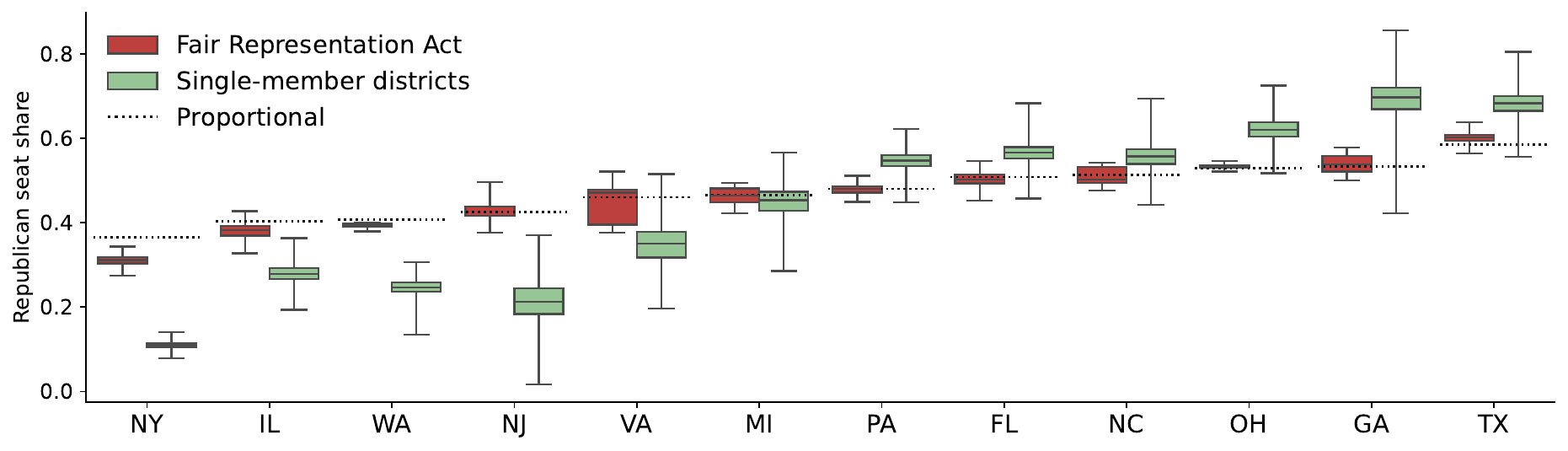}
   \caption{Comparison of Republican seat share distribution of the Fair Representation Act-compliant ensembles using the STV/PAV voting rule to single-member district baseline for a selection of states (see Appendix \cref{fig:hr4000_by_rule} for additional voting rules).}
   \label{fig:hr4000_smd_baseline}
\end{figure}

\parbold{Effect of social choice function.} Perhaps the most obvious takeaway is that using a winner-take-all procedure with MMDs, as was popular in the twentieth century and still in effect in some state legislatures, enables rampant gerrymandering at nearly all district sizes and should be avoided (in fact, while proportionality gaps with one large winner-take-all MMD in each state cancel out at the national level in \Cref{fig:propdifferentmethods}, as shown in \Cref{fig:propmethods_mostfair} they increase to close to the theoretical maximum with larger MMDs).

Our results also illustrate the differential effects of the Thiele design parameter $\lambda$, especially as a function of district size. Our experiments test $\lambda(\cdot) = 1$ (winner-take-all), $\lambda(i) = \frac{1}{i}$ (PAV), and $\lambda(i) = \frac{1}{i^2}$ (Thiele Squared). Among these, we find that as we increase the rate at which $\lambda$ decays in $i$, the corresponding Thiele rule more strongly results in legislatures that are equally composed of members of different groups, regardless of the size of the underlying group. Especially with even-member districts, then, such a rule leads to proportional representation when the parties are approximately equal-sized---even more so than does STV or PAV---at the cost of proportionality for settings with imbalanced party size. For example, consider the difference between the national-scale proportionality of Thiele squared in \Cref{fig:propdifferentmethods} with the per-state proportionality gap in \Cref{fig:propbystate}, with two-member districts. More generally, our empirical framework may be helpful in complementing theoretical analysis of Thiele rules (see, e.g., \citet{skowronProportionalityDegreeMultiwinner2019}) in understanding the effect of the design parameter in a given real-world setting, given the actual distribution of voters.

\parbold{District size design heterogeneity and recommendations.}  Our discussion above suggests that two- or three-member districts (as possible depending on the number of seats) with STV or PAV would work well across most political conditions in the United States. In particular, fairness-minded independent commissions can achieve proportional outcomes in every state up to rounding, and advantage-minded partisans have their power to gerrymander significantly curtailed.

 However, there is also important heterogeneity; individual states can further promote proportional outcomes by tuning MMD parameters based on their population, partisan lean, and political geography. See Appendix \cref{fig:boxall} for full state-by-state results; here, we discuss the most relevant factors influencing `optimal' district size. We observe that small and highly partisan states both benefit from larger districts. Smaller states have fewer opportunities for wasted votes to cancel out across districts and so can more easily become disproportionate. In contrast, highly partisan states often suffer from the `Massachusetts problem,' where minority parties have difficulty breaking the requisite threshold in any district due to their diffuse voter base. In general, states with significant geographic self-segregation could protect the political power of both concentrated and diffuse voters with larger districts that inherently waste fewer surplus and losing votes.

 As such, flexibility across states may be beneficial; however, as \Cref{fig:propdifferentmethods} shows, such flexibility may also increase gerrymandering, if within a state partisan gerrymanderers can use a mixture of district sizes. We evaluate this trade-off by studying the potential outcomes of redistricting under the Fair Representation Act, a proposal mandating that the number of seats per district satisfy $N_k \in \{3, 4, 5\}$. As before, we sample random maps and construct optimal maps for each party, when a map can have districts of size 3, 4, or 5, such that the total number of seats adds up to the number of seats in the state. \cref{fig:hr4000_smd_baseline} suggests that this choice would allow flexibility without giving gerrymanderers too many degrees of freedom in district size: it enables proportionality almost everywhere, and it substantially reduces gerrymandering capacity over the baseline of using just single-member districts. Appendix \cref{fig:hr4000_size_baseline} further shows that such flexibility does not offer significant additional capacity to gerrymander over a baseline of using just three- or just five-member districts, in contrast to the case of mixing single- and two-member districts.

 {This finding is important because coherent communities substantially vary in size and so are best represented by variable-sized districts. For instance, in Texas, the Houston metro area may be best covered by a five-member district, whereas San Antonio, a city with roughly half the population, may best be covered by a three-member district. Under a fixed number of seats per district, either Houston is split, or San Antonio is grouped together with disjoint communities.} The Fair Representation Act's approach would allow both district sizes. In Appendix \ref{sec:intra}, we continue evaluating such trade-offs, studying \textit{intra-party} effects of MMDs with STV, where the \textit{solid coalitions} assumption no longer holds.

\FloatBarrier
\section{Results: Beyond Solid Coalitions}
\label{sec:newwithoutsolidcoalitions}
So far, we have made a \textit{solid coalitions} assumption with STV: that each voter ranks all candidates of their own party above all candidates of the other party. In practice, voters may not behave this way, and may cross party lines in how they rank candidates in any given election. In this section, we empirically show that such crossover voting does not affect our primary results -- that multi-member districts with STV blunt the effect of gerrymandering -- under a standard model of such voting. Intuitively, gerrymandering with any number of districts becomes less effective with unpredictable cross-party voting (since the optimization target has higher variance), and we empirically find that this effect is additive to the effects of multi-member districts.

The biggest challenge to studying such effects is that \Cref{prop:stvequalspav} does not hold without the solid coalitions assumption, and so we must now generate voter rankings and simulate STV, for each district in each potential map.\footnote{In this section, we exclusively consider STV, as it allows voters within a party to prioritize different candidates, without risking their party overall representation by not approving a same-party candidate (as could happen with approval voting based methods like Thiele rules). Studying approval methods would require analyzing primaries, to select exactly $N_k$ candidates from each party.} To simulate voter rankings, we leverage additional data that characterizes individual voter partisan lean, and then construct voter rankings over simulated candidates using a random utility model.

\subsection{Methods and assumptions}
\label{sec:solidcoalitionsmethod}
To study settings beyond the solid coalitions assumption, we first construct plausible voter rankings and candidate distributions; second, we simulate STV elections given a map.

\paragraph{Voter generation and multi-dimensional preferences.}
The key step is to develop a model for how a voter will vote given a menu of (hypothetical) candidates. Up to now, we have only assumed that voters approve all candidates of their party or rank them all above candidates of the other party, but here we require a model for how voters rank candidates both within and across parties. We do so as follows, using the voter file described in \Cref{sec:empiricalmethod}. Each voter in our dataset is assigned scores along several ideological and demographic dimensions derived from survey responses and administrative records. In this work, we use a single scalar \emph{partisan score}, taking values in $[0,100]$, which represents the modeled probability that the voter supports the Democratic candidate in a two-party general election. This score is constructed using party registration, historical primary participation, demographic variables, and related covariates, and has been shown to correlate strongly with issue-specific preference scores \citep{predictwise2021}. We threshold the partisan score at a fixed value of 50: voters with scores above the threshold are classified as Democrats, and those below as Republicans. Each voter is also associated with a census tract corresponding to their home location. Given a map, this uniquely assigns each voter to a district. The dataset is rebalanced to match the state-level vote shares in recent statewide elections, as computed in \Cref{sec:empiricalmethod}. For computational tractability, in each simulation we sample at most 50{,}000 voters per state. For replication purposes while preserving data privacy, in our code repository we provide a subsample of 50{,}000 voters, with noise added to the scores.

\paragraph{Candidate generation.}
Candidates are generated independently for each $(\text{party}, \text{census tract})$ pair. For each such pair, we create a single candidate whose partisan score is set to the median partisan score among voters of that party residing in the tract. This candidate is eligible to run in any district containing that census tract. For computational reasons, we cap the total number of candidates per state at 1{,}000, sampling at most 500 candidates from each party.

\paragraph{Voters and candidates in an election.}
A map partitions census tracts into districts. Since both voters and candidates are associated with census tracts, each belongs to exactly one district under a given map. In a given district-level election, voters may rank only candidates whose census tracts lie within the same district. Depending on the map, a district may contain a large number of candidates (for example, if the district spans most or all of a state). For both computational cost and statistical reasons arising from large candidate pools, we further subsample candidates within each district.

Specifically, from each party we sample at most a number of candidates equal to the number of seats to be filled in the district. This restriction is imposed for the following reason. Under the random utility models we consider, as described below, utilities are composed of a deterministic component that favors ideologically proximate candidates and an i.i.d.\ noise term. When the candidate pool is very large, extreme-value effects dominate: with max concentrating noise distributions such as the Normal distribution, the highest-utility candidates for a given voter will all be drawn from the voter's own party with high probability. As we aim to study crossover voting and go beyond the solid coalitions assumption, we subsample the candidates. This subsampling can reflect a primary or a natural limitation on the number of competitive candidates from each party in a district, enabling meaningful crossover votes.

\paragraph{Turning voter preferences into rankings.} In each simulated election, we need a ranking over candidates from each voter. We generate complete rankings over candidates using a random utility model (RUM). Let $s_v$ denote the partisan score of voter $v$ and $s_c$ the partisan score of candidate $c$. The utility that voter $v$ assigns to candidate $c$ is
\[
u(v,c) = -|s_v - s_c| + \varepsilon_{vc},
\]
where $\varepsilon_{vc}$ are i.i.d.\ idiosyncratic values drawn from a distribution $F$. We consider two choices for $F$: (i) a Normal distribution with mean $0$ and standard deviation $\sigma$, and (ii) a Gumbel distribution with location parameter $0$ and scale parameter $\beta$ (implying mean $0.5772\beta$ and variance $\tfrac{\pi^2}{6}\beta^2$). We report results for $\sigma, \beta \in \{0, 5, 50, 100, 250, 500\}$. When $\varepsilon_{vc}$ follows a Gumbel distribution, the induced distribution over rankings coincides with the Plackett-Luce model \citep{luce1959individual,plackett1975analysis,mcfadden1972conditional}. When $\varepsilon_{vc}$ is Normally distributed, this corresponds to the classical Thurstone-Mosteller random utility model \citep{mosteller1951remarks,thurstone2017law}. Unlike the Gumbel distribution case, Normal noise does not satisfy independence of irrelevant alternatives. While both models differ in their tail behavior, we find that our qualitative insights do not change with the noise model considered. For intuition on the noise levels, with a Gumbel random variable with $\beta = 50$, in our data a voter ranks a candidate from the other party first with probability about 25\%; with $\beta = 5$, this probability is about $5\%$. However, we note that the mapping from such crossover probabilities to map robustness is not straightforward, given that we subsample candidates and voters. Thus, we report results across a range of noise levels, to show that our insights are robust across different levels of crossover voting.

Finally, unlike in actual elections, we assume that there is no \textit{ballot exhaustion}. Each voter submits a complete ranking of all candidates in their district, ordered by decreasing realized utility.

\paragraph{Simulating STV elections.}
 Given the candidates, sampled voters, and the voters' rankings over the candidates, we run fractional STV for each district in each given map to determine the elected candidates. Fractional STV is STV as defined above. In each round, either a winner is selected if they have at least the Droop quota number of first-place votes, or a candidate with the least votes is eliminated. This candidate's votes are transferred, by eliminating this candidate from each voter's list. In fractional STV, votes are transferred as follows. Formally, suppose the winning candidate receives $v > Q$ first-place votes. Then, a fraction $\frac{v - Q}{v}$ of this candidate's votes is in excess of what is needed to win. Thus, each voter for this candidate has their weight multiplied by $\frac{v - Q}{v}$. For example, suppose the Droop quota was 5, and a candidate received 6.3 first-place votes (fractional votes arise from earlier-round transfers). Then, each of their voters has their weight multiplied by $\frac{1.3}{6.3}$ for the next round.

Running these STV elections utilized over 120 CPU-weeks of compute, on top of the map generation discussed in \Cref{sec:empiricalmethod}. This runtime is for \textit{given} maps, underscoring the necessity of \Cref{prop:stvequalspav} to study partisan seat shares without needing to simulate STV during the redistricting optimization process.

\paragraph{Discussion.} There has been much work on spatial voter models aiming to characterize such behavior based on the ``distance'' between the voter and each candidate~\citep{adamsModerateVotersWeigh2017,tausanovitchDoesIdeologicalProximity2018,shorIdeologyUSCongressional2018,jesseeSpatialVoting20042009}. Random Utility Models in general, and the two we use in particular, reflect ``spatial'' theories of voting \citep{enelow1984spatial,merrill1999unified,azari2012random} and have been widely used in the social choice and political science literatures to simulate preference profiles and analyze voting rules. However, modeling voters' ranked preferences is a hard challenge -- it is not clear that voters behave according to such ideological spatial positioning. Thus, our results should not be interpreted as what \textit{would} necessarily happen with multi-member districts. We note, however, that we expect our main insight -- that multi-member districts blunt the effects of gerrymandering -- to hold more generally, as gerrymandering becomes less effective when voter behavior is less predictable. Our approach here provides a first step towards quantifying such effects, by providing a plausible model for voter behavior and simulating STV elections accordingly.

\begin{figure}
	\centering
		\centering
		\includegraphics[width=.5\textwidth]{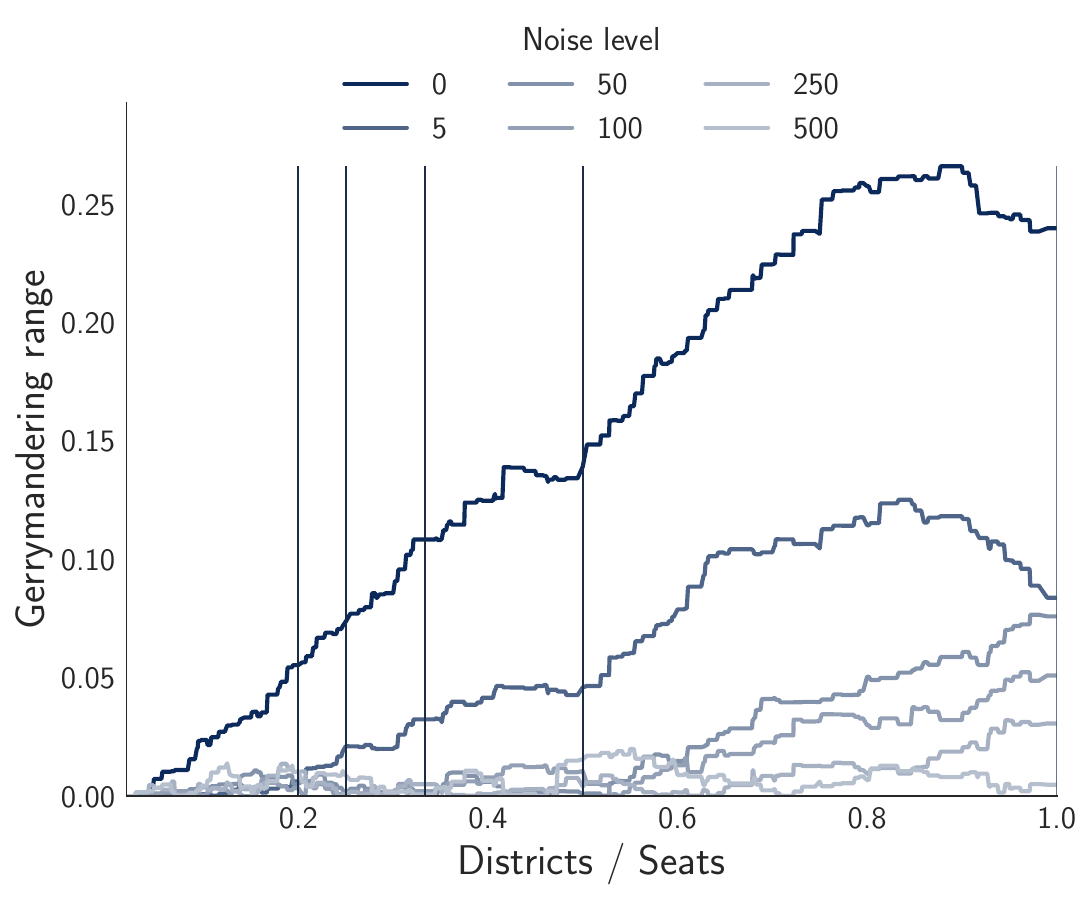}
		\caption{Gerrymandering range (difference in Republican seat share between Republican-optimized and Democrat-optimized maps) for different voter ranking noise levels, to simulate cross-party voting. Here, the idiosyncratic noise term for each voter-candidate pair is drawn from a Gumbel distribution with the noise level corresponding to $\beta$ (with higher noise corresponding to higher variance). Results with Normal distribution noise are in \cref{fig:noisegerrymanderinggap_normal} and are qualitatively similar. As noise levels increase, the range of seat shares feasible using gerrymandering decreases, and the effect is in addition to that of using bigger districts with STV. Intuitively, higher levels of voter noise correspond to gerrymandering being harder, since individual voters differ more from what is predicted.}
		\label{fig:noisegerrymanderinggap}
\end{figure}

\subsection{Results}
\label{sec:newwithoutsolidcoalitions_results}
\Cref{fig:noisegerrymanderinggap} contains our results with Gumbel noise, showing how the gerrymandering range contracts both with increased voter noise (i.e., crossover voting) and district size, with the effects being cumulative. In particular, the zero-noise line shows the gerrymandering range (difference in Republican seat share between Republican-optimized and Democrat-optimized maps) when votes are simulated using the above approach (with sampled voters and synthetic candidates); we average across the 25 most extreme maps calculated for each party for each setting with zero noise. Results without noise largely reflect the range in \Cref{fig:propdifferentmethods}, with some noise introduced by our voter sampling process (recall that we rebalanced our voter file to reflect the state-wise seat shares but not necessarily at each precinct).

Then, for each of the other lines, we show the empirical gerrymandering range for the same set of maps as optimized in the zero-noise case; now, however, voter rankings over candidates are noisy, as described above. This reflects a setting where parties would gerrymander without factoring in crossover voting (in practice, they might factor in the potential of crossover voting but not each voter's realization).

As voter noise increases, there are more crossover votes, and the gerrymandering range decreases. At a high level, the result establishes that crossover voting does not affect the qualitative insights of our work: the gerrymandering range decreases with STV as districts become larger, and much of the benefit is realized with three-member districts. The effect of noise is intuitive, as it makes the optimizer's task more difficult when it is not foreseeable (the optimizer draws maps assuming zero noise). \Cref{fig:noisegerrymanderinggap_normal} replicates this plot using Normal noise, and \Cref{fig:noisegerrymanderinggap_expostoptimization} shows the result when the noise is perfectly foreseeable, i.e., the optimizer observes the voter ranking realizations over candidates when drawing maps, and different maps may lead to different candidate and ranking realizations. Both alternative analyses replicate the results; in the latter case with foreseeable voter rankings, noise increases gerrymandering ability (since the additional noise yields more extreme voter preferences), but gerrymandering range decreases with district size.

More broadly, this section gives an approach to evaluating voting rules and redistricting under more complex voting preferences: generate voter rankings from individual-level data, simulate the social choice rule on a fixed set of maps, and compare outcomes along interpretable dimensions. Here, we relax solid coalitions across parties, allowing voters to rank some opposite-party candidates highly through a random utility model. \Cref{sec:intra} uses the same high-level approach but focuses on intra-party effects. As a result, it retains party-line ordering across parties but allows voters within the same party to disagree about same-party candidates. There, voters rank candidates according either to partisan-score proximity or geographic proximity (and without noise), allowing us to study how MMDs affect intra-party diversity and geographic cohesion. The appendix results suggest a tradeoff: larger MMDs can increase within-party diversity among winners, but very large MMDs weaken the geographic cohesion of the voter coalitions supporting each representative. Thus, medium-sized MMDs may be a sweet spot for balancing these effects, while also achieving most of the proportionality benefits of larger MMDs (as shown in \Cref{sec:proportionality}).

\FloatBarrier

\section{Discussion}
\label{sec:discussion}
\paragraph{Discussion and limitations.} In this work, we focus empirically on the United States and on MMDs. There are many other systems designed for proportionality in use internationally; a comparative study would be of use. Future work can also adapt our methodology to study partisan gerrymandering in state and local legislatures, especially in the states that already use (winner-take-all) MMDs. Such legislatures stand to benefit even more from MMDs than congressional districts given that they cannot rely on interstate cancellation of partisan bias as is currently the case in the House of Representatives. While we expect the high-level insights to hold, details may differ when zooming in with more seats in a smaller region due to, e.g., effects of geography.

There is much left to do to understand multi-member districts in practice, especially how they relate to other constraints and laws. (1) The 2021 version of the Fair Representation Act has a provision\footnote{Sec. 205, here: \url{https://beyer.house.gov/uploadedfiles/fair_representation_act_117th_final.pdf}} that MMDs should not be used in states where doing so would violate the Voting Rights Act; it would be crucial to analyze under what circumstances this could happen and how one would measure violations, for example by diluting the voting power of racial minorities. (2) More generally, we impose no constraints beyond contiguity and population balance, but many areas impose additional requirements (such as minimizing the splitting of county boundaries).  (3) We assume that voters rank all candidates, when in practice there may be ``ballot exhaustion.'' We note that our main results show the effectiveness of two- and three-member districts, where voters may be more likely to submit full rankings. Cataloging such constraints and enforcing them computationally is challenging; we conjecture that adding such features from practice is important but will have second-order effects. Also crucial is studying how intra-party effects such as those studied in Appendix \ref{sec:intra} (and in the work of, e.g., \citet{chicago2019} and \citet{lowell2019}) interact with partisan gerrymandering (or gerrymandering on other characteristics).

\paragraph{Conclusion.} {We study the joint gerrymandering and social choice problem, showing the promise of multi-member districts with non-winner-take-all rules in ensuring partisan proportionality, in terms of both enabling independent commissions and constraining partisan gerrymanders. As an application of our approach, we computationally analyze the Fair Representation Act with three- to five-member districts in the United States.}
Finally, while this work is empirical, it raises theoretical questions that we expect to be of interest to the computational social choice community. For example, much is known theoretically about proportionality in the case of one large MMD (see, e.g., \citet{skowronProportionalityDegreeMultiwinner2019} and references therein). Theoretically analyzing proportionality in the case of multiple MMDs, under random and adversarial ways to construct districts, would be of interest. Our results empirically show that under party-based voting, the proportionality gap decreases (non-monotonically) with district size; proving this effect theoretically under more general assumptions, such as those empirically shown in \Cref{sec:newwithoutsolidcoalitions} with a noisy voter ranking model, is of interest.

	\bibliographystyle{ACM-Reference-Format}
	\bibliography{bib,multimember,fairmandering}

\newpage
\FloatBarrier
\appendix
\crefalias{section}{appsec}
\crefalias{subsection}{appsec}
\crefalias{subsubsection}{appsec}

\section{Supplementary results}
\FloatBarrier

\cref{fig:boxall} shows the complete state-by-state results for STV-based elections where the high-level patterns from \cref{fig:propdifferentmethods} can be seen in more granular detail. For example, the bump structure at uniform district sizes is clear for Michigan (MI), Florida (FL), Wisconsin (WI), California (CA), North Carolina (NC), and Pennsylvania (PA), among others. In CA, MA, NY, OK, and TN single-member proportional plans are not even possible due to the diffusion of minority party voters. In all states, the maximum gerrymandering capability peaks at about 1.5 seats per district and starts to decay rapidly with larger districts.

In \cref{fig:advantage_by_rule} we see that within our ensemble of maps, under full control of the redistricting process, Democrats could actually more effectively gerrymander the House of Representatives than Republicans, a finding counter to the conventional wisdom that geography favors Republican gerrymandering \citep{chen2013unintentional,borodin2018big}. One difference may be that our analysis is still using only ``natural'' districts -- those which aren't wildly contorted -- and therefore does not include the most surgical gerrymanders. This is a standard challenge in redistricting algorithms, which are effective for large-scale analysis but can often be outperformed (with respect to any metric) in any specific setting by experts. Regardless, as shown in \cref{fig:boxall}, it is true that Democrats can gerrymander their large states (CA, NY, and IL) more effectively than Republicans can gerrymander theirs (TX and FL), and that when ignoring the VRA, Democrats can crack large cities into many Democratic-leaning wedge-shaped districts. Furthermore, the gap in advantage is largest at two- and three-member districts because these thresholds enable Democrats to more efficiently place their voters than Republicans. For STV in particular, Democrats can break the $\frac23$ or $\frac34$ threshold needed for a sweep in many urban districts, while also still clearing the $\frac13$ or $\frac14$ threshold needed to gain one seat in rural districts, resulting in fewer wasted votes across most types of districts. While critics might point to this as a deal-breaker, it is important to recognize that this is only true in the limit of Democratic control, and such a scenario is deeply unrealistic.

In \cref{fig:propmethods_median}, we show how the median, max Republican, and max Democratic absolute proportionality gap changes as a function of the voting rule and ratio of districts to seats. The median gap drops by about a factor of three between single-member and two-member districts and continues to slowly decay with larger districts. The median gap is a relevant metric because it proxies how easy it is to create a proportional map. Similarly, if the median map is fair, then there exist many fair maps, and it becomes easier to optimize for other desirable criteria like proportional racial representation, maintaining political subdivisions, and compactness. Also in \cref{fig:propmethods_median}, we see that STV and Thiele squared track each other very closely, with the exception of Democratic gerrymanders of two- and three-member districts. This follows similar logic as the overall Democratic advantage analysis above, except that because Thiele squared requires more votes for a sweep, Democrats can no longer rely on full control of districts with just above $\frac23$ of votes, and so end up wasting many votes in more heavily democratic areas by just missing the sweep threshold.

\begin{figure*}
				\centering
		\includegraphics[width=.76\linewidth]{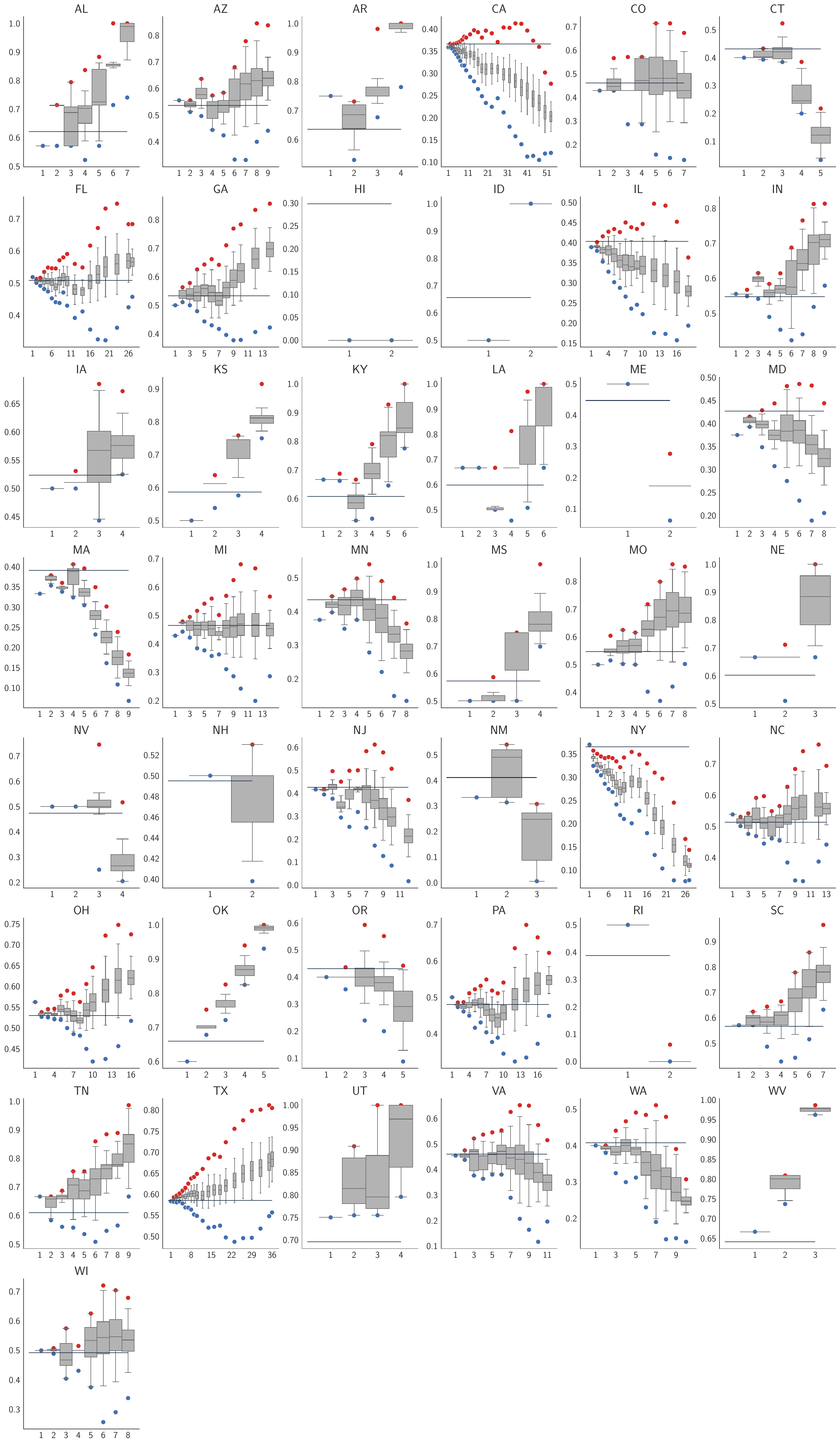}
		\caption{\Cref{fig:propbystatebox}, repeated for each state with at least 2 Representatives. The y-axis is Republican seat share.}
		\label{fig:boxall}
\end{figure*}

\begin{figure}
    \centering
    \includegraphics[width=0.97\textwidth]{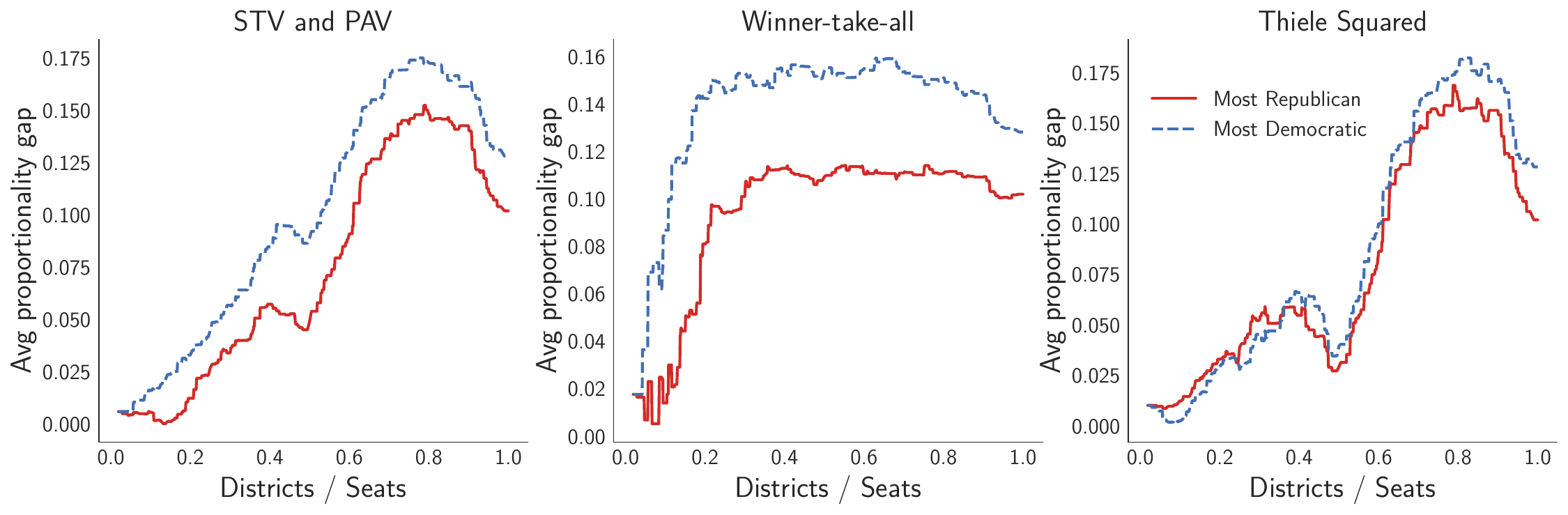}
    \caption{Proportionality gap in favor of each party in  the most advantageous map}
    \label{fig:advantage_by_rule}
\end{figure}

\begin{figure*}
	\centering
	\includegraphics[width=1\linewidth]{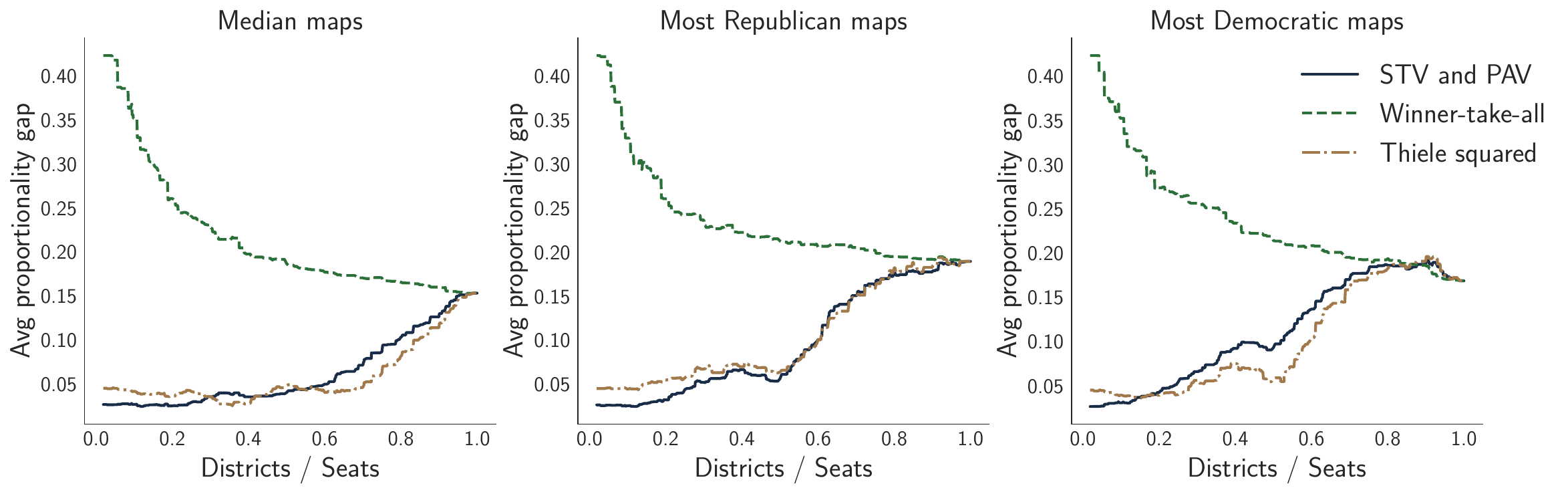}
	\caption{\Cref{fig:propmethods_mostfair}, repeated for Median, Most Republican, and Most Democratic maps.}
	\label{fig:propmethods_median}
\end{figure*}

\begin{figure}
    \centering
    \includegraphics[width=\linewidth]{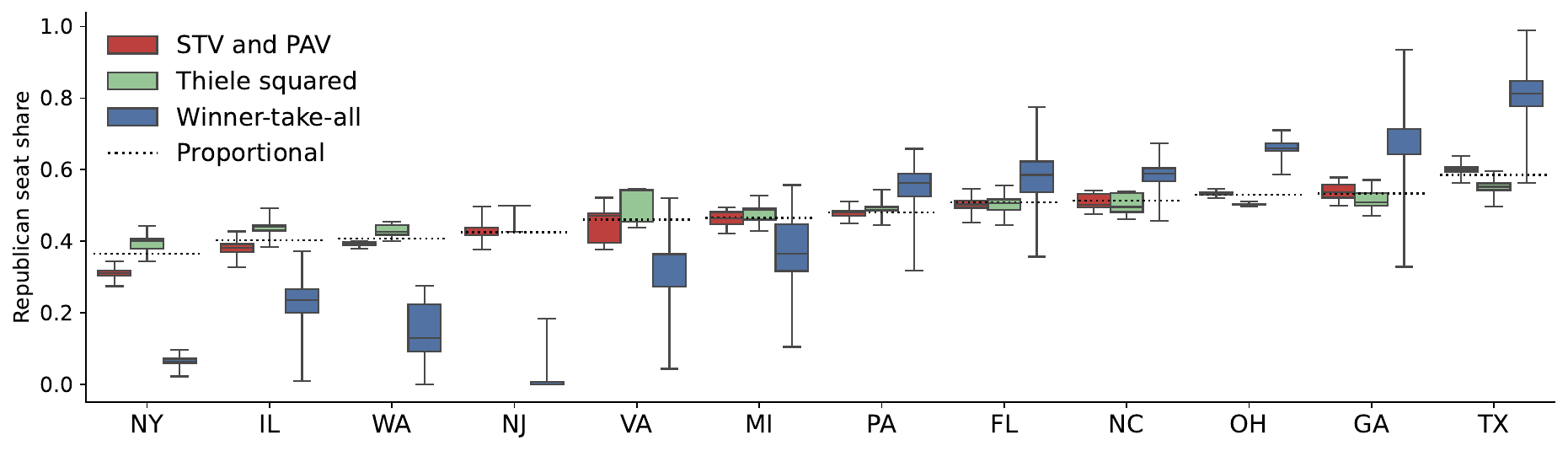}
    \caption{Seat share distribution of Fair Representation Act-compliant ensembles by voting rule.}
    \label{fig:hr4000_by_rule}
\end{figure}

\begin{figure}
    \centering
    \includegraphics[width=\linewidth]{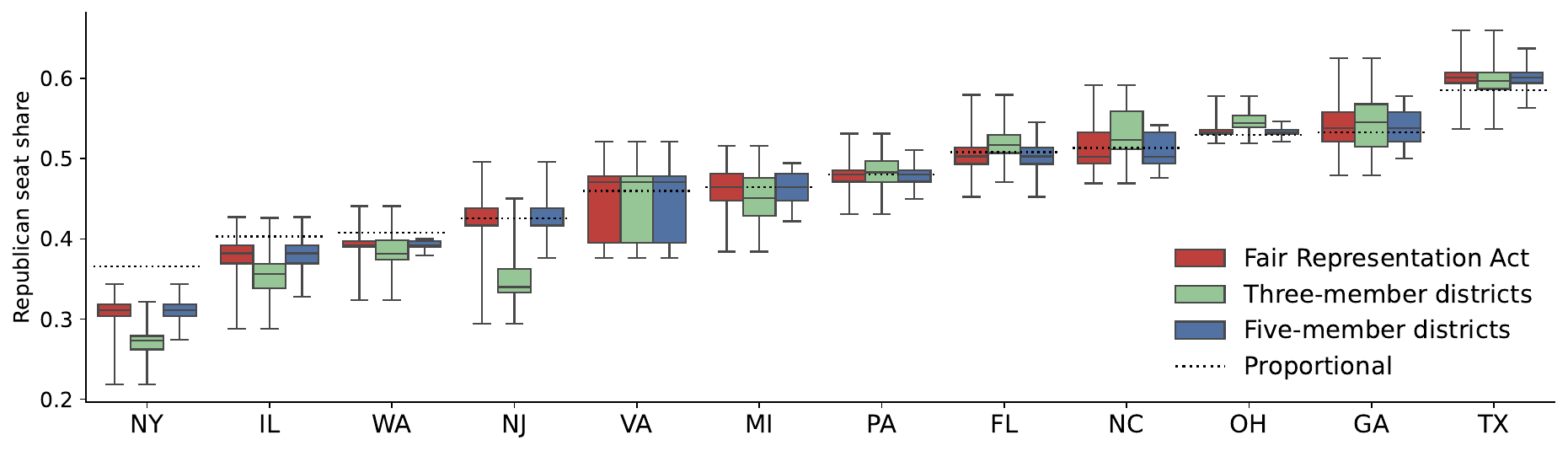}
    \caption{Comparison of Republican seat share distribution of Fair Representation Act-compliant ensembles with baseline for a selection of states using the STV/PAV voting rule (see Appendix \cref{fig:hr4000_by_rule} for more rules). The Fair Representation Act has $N_k \in \{3, 4, 5\}$. The baseline is to use only three-member and only five-member districts (with four-member districts for overflow as necessary).}
    \label{fig:hr4000_size_baseline}
\end{figure}

\begin{figure}
    \centering
    \includegraphics[width=\linewidth]{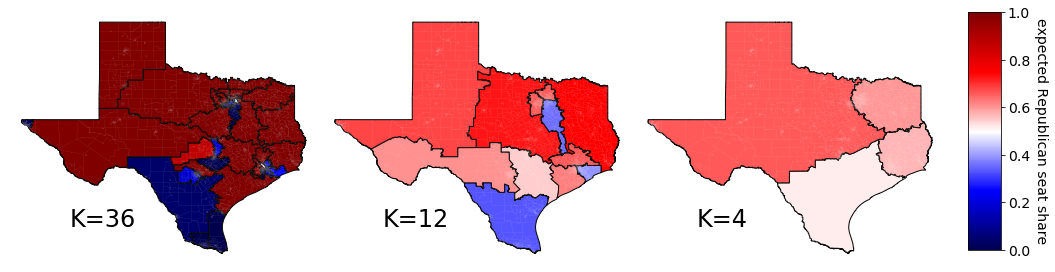}
    \caption{Example fair multi-member plans for Texas. Color denotes expected Republican seat share.}
    \label{fig:tx_plans}
\end{figure}

\begin{figure}[tbhp]
    \centering
    \includegraphics[width=\linewidth]{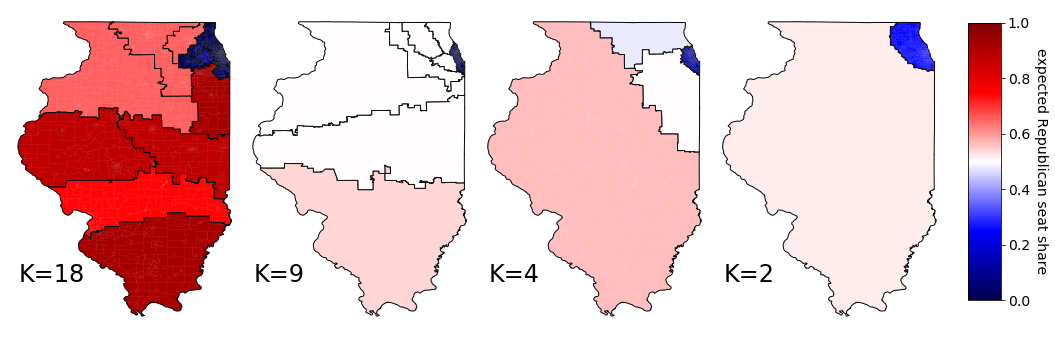}
    \caption{Example fair multi-member plans for Illinois. Color denotes expected Republican seat share.}
    \label{fig:il_plans}
\end{figure}

\begin{figure}[tbhp]
	\centering
		\centering
		\includegraphics[width=.5\textwidth]{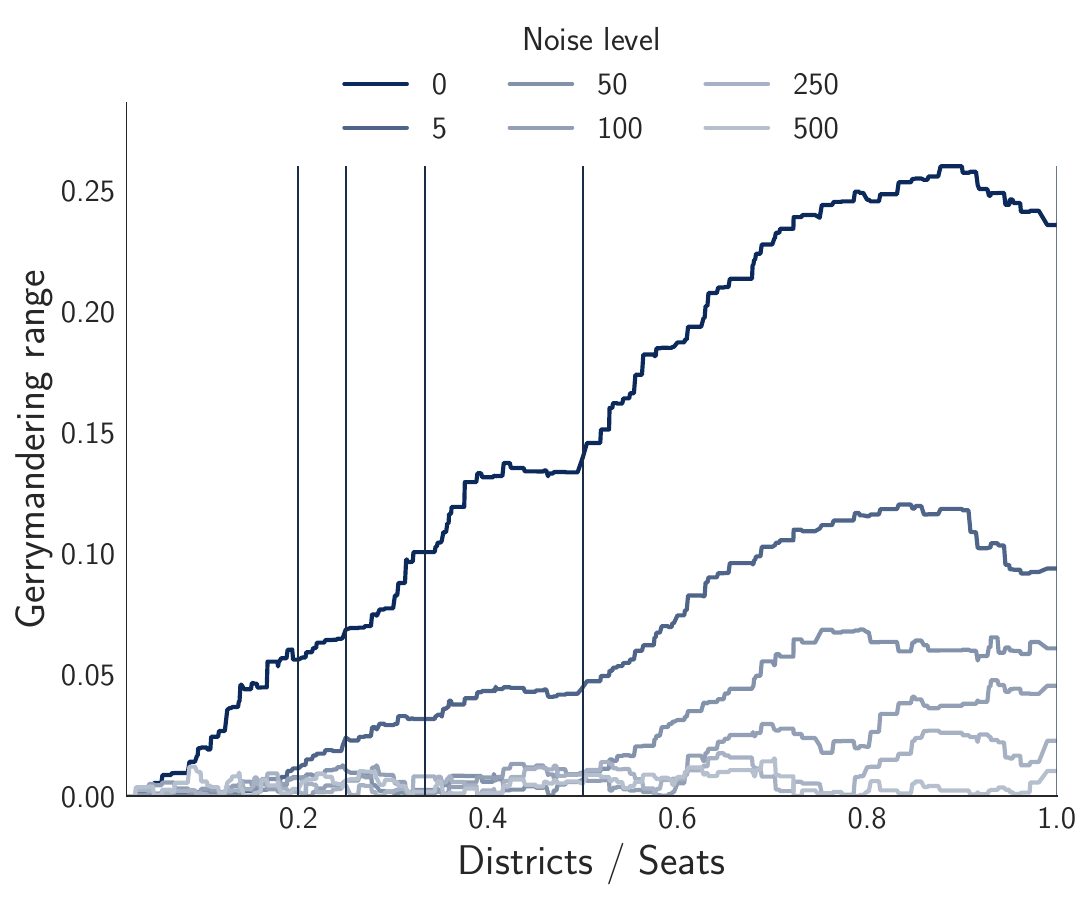} 
		\caption{Replication of \cref{fig:noisegerrymanderinggap} but with idiosyncratic noise drawn from a Normal distribution. Here, the noise level corresponds to the standard deviation $\sigma$. Results are qualitatively identical.}
		\label{fig:noisegerrymanderinggap_normal}
\end{figure}

\begin{figure}[tbhp]
	\centering
		\begin{subfigure}[b]{0.48\textwidth}
		\centering
		\includegraphics[width=.95\textwidth]{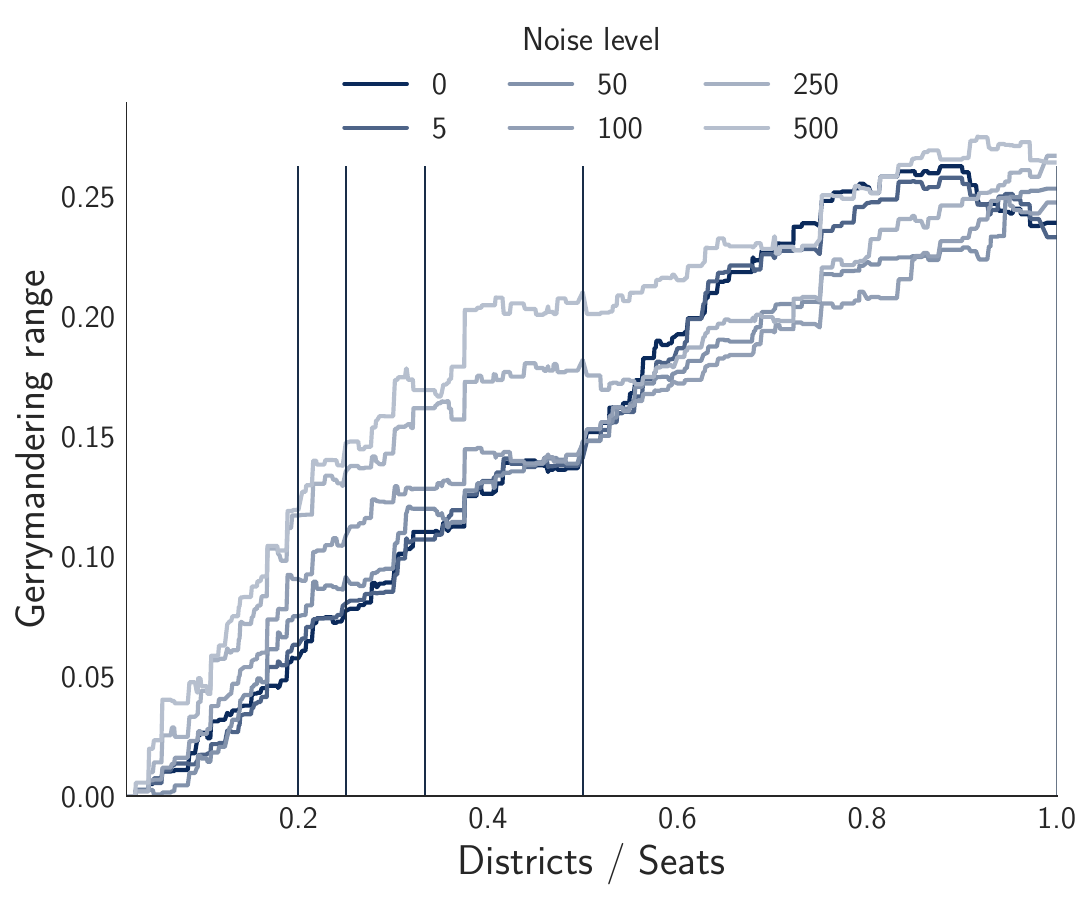} 
		\caption{Gumbel noise}
			\end{subfigure}
~
		\begin{subfigure}[b]{0.48\textwidth}
		\centering
		\includegraphics[width=.95\textwidth]{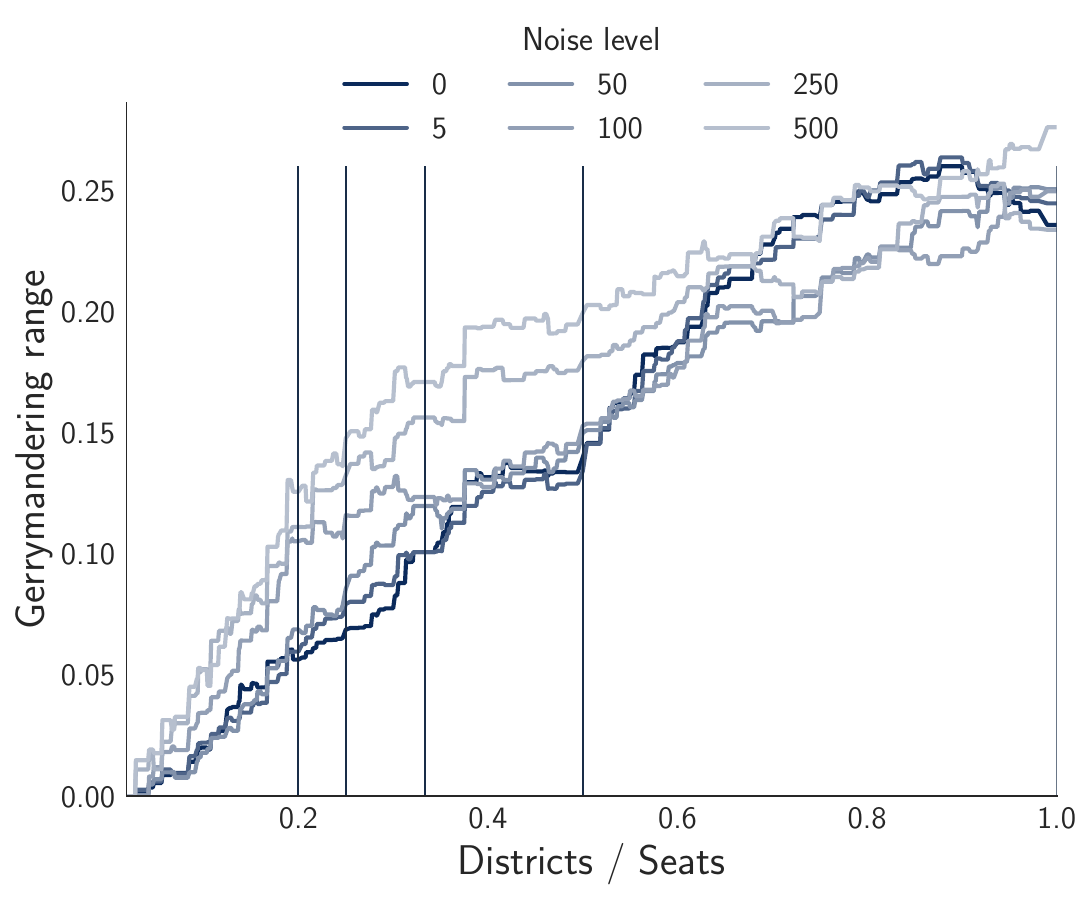} 
		\caption{Normal noise}

			\end{subfigure}
            		\caption{Replication of \cref{fig:noisegerrymanderinggap} but with the optimizer having perfect knowledge of voter-noise realizations (which may also differ by map). Now, gerrymandering ability increases with knowable noise (since different maps may have different voter noise distributions, allowing for more extreme outcomes with noise). However, as before, gerrymandering range decreases with district size, with STV -- even with an extreme case of optimizer knowledge, MMDs decrease their ability to construct disproportionate outcomes.}
		\label{fig:noisegerrymanderinggap_expostoptimization}
\end{figure}

\FloatBarrier
\section{Results: Intra-party diversity}
\label{sec:intra}

So far, we have studied how multi-member districts affect the balance of power between parties. However, one of the primary justifications of Ranked Choice Voting (either for SMDs or MMDs) is that it enables minor parties or ideologies to gain seats -- it blunts the game-theoretic logic that tends to push winner-take-all democracies toward two parties. In this section, we analyze such claims by showing the effects of MMDs on intra-party diversity.

As in \Cref{sec:newwithoutsolidcoalitions}, the \textit{solid coalitions} assumption of \Cref{prop:stvequalspav} does not hold. There, we analyzed crossover votes between parties; now, we analyze crossover votes \textit{within each party} but between coalitions within that party; we must now generate intra-party rankings and simulate STV.\footnote{In this section, we exclusively consider STV, as it allows voters within a party to prioritize different candidates, without risking their party overall representation by not approving a same-party candidate (as could happen with approval voting based methods like Thiele rules). Studying approval methods would require analyzing primaries, to select exactly $N_k$ candidates from each party.} There is a further conceptual challenge: \textit{coalitions} within parties are not well-defined, and identifying them with data is challenging; with many candidates, no two voters may share the same exact ballot preference order. In theory, results are often proven with regard to arbitrary-sized coalitions that approve the same candidates (cf. \citet{skowronProportionalityDegreeMultiwinner2019}), but relating the guarantees back to coalitions in practice is difficult; it is unclear which coalitions proportionality should be defined with respect to, especially with limited ranked choice data, and how exactly to define proportionality for these groups.

We leverage additional political structure to tackle these challenges. (1) First, instead of considering arbitrary coalitions, we examine voter rankings emerging from differing assumptions on how voters value two dimensions: (approximate) political preference, and geographic preference. We choose these dimensions since single-member districts automatically impose one dimension (geographic) as more important, while in theory MMDs allow voters to prefer political connection over geographic proximity. (2) Second, we measure intra-party effects through two measures: the \textit{diversity of the winning candidate set}, and the \textit{diversity of the coalitions supporting each winner}. If the former increases, then that means more distinct within-party preferences are represented by the winning set. If the latter increases, that means each particular winner is accountable to a more diverse intra-party voter base. However, we do not claim that these dimensions are the most important -- our simulation provides a method to understand how various dimensions trade off; insight for any given political setting requires careful consideration of the most important dimensions and their relationships. Rollout of MMDs might further affect partisan behavior and relationships, making it challenging to study such effects precisely before an implementation.

\subsection{Methods and assumptions}
\label{sec:intramethod}

To study the case when the solid coalitions assumption is broken, we first need to construct plausible intra-party voter rankings and candidate distributions;  second, we need to simulate STV elections given a map. Methodological details are deferred to \Cref{sec:intramethoddetails}. These methods are largely similar to those of \Cref{sec:newwithoutsolidcoalitions}.

\parbold{Generating voters and candidates.} For voters, we simply use the voters in each state in the voter file -- subsampling 5{,}000 voters per district in each simulation. For replication purposes while preserving data privacy, in our code repository we provide a subsample of 50{,}000 voters, with noise added to the scores. We further generate one candidate per (party, census tract) combination for each simulation, with a maximum of 1{,}000 candidates per election.

\parbold{Constructing intra-party rankings.} We require further assumptions on how voters differentially rank candidates \textit{within} their preferred party, to study how MMDs may change the characteristics of winners within a party.
The key challenge is to develop a model for how a voter -- given their characteristics -- will vote given a menu of (hypothetical) candidates. Up to now, we have only assumed that voters approve all candidates of their party or rank them all above candidates of the other party.

We do so as follows, using the voter file described in \Cref{sec:empiricalmethod}. Recall that we have individual-level voter data in each tract, along with demographic information and ideological scores; in particular, we use the voter's geographic location (census tract of home address) and a univariate \textit{partisan score} indicating the strength of their party affiliation (most Republican to most Democratic). We generate many synthetic candidates for each party in each district, with varying partisan scores; the distribution of candidates reflects that of the voters.

Finally,  we assume that voters rank all candidates. They rank all same-party candidates over other-party candidates. Then, within each party, they rank candidates in order of the distance between their characteristics (either partisan scores or geographic location, in different simulations).\footnote{This approach is conceptually related to that of \citet{beckerComputationalRedistrictingVoting}, who use Ecological Inference (EI) to study racial groups' vote choices in primaries, to study how to draw SMDs such that a minority group's preferred candidate wins both a primary and the general election -- with the insight that racial composition does not solely determine whether a district is effective for minorities, as within-party vote choice may not only depend on race. Our approach replaces EI with a calibrated, individual-level voter file; i.e., we assume that voters order candidates using their individual-level partisan scores and allow for the possibility that voters with different demographic characteristics may nevertheless vote similarly. For MMD analysis, both methods require extrapolating the scores to rankings, and thus require assumptions on how voters will vote for hypothetical candidates. %
\citet{benadeRankedChoiceVoting2021} use a combination of historical voting behavior and ranked-choice model assumptions to generate such rankings, while we sample candidates and impose a spatial model between candidates and voters.} As a result, the \textit{solid coalitions} assumption no longer holds within parties, where voter preferences may not be expressed neatly in terms of subparties.

We note that while there has been much work on spatial voter models aiming to characterize such behavior based on the ``distance'' between the voter and each candidate~\citep{adamsModerateVotersWeigh2017,tausanovitchDoesIdeologicalProximity2018,shorIdeologyUSCongressional2018,jesseeSpatialVoting20042009}, it is a hard challenge -- it is not clear that voters behave according to such ideological spatial positioning, beyond the relative consistency of which party one votes for. Thus, our results should not be interpreted as what \textit{would} necessarily happen with multi-member districts, but what intra-party coalitions \textit{could or could not} be formed with MMDs given voter interest, and how these coalitions differ from those possible under SMDs.

\parbold{Running STV elections.} Given the candidates, sampled voters, and the voters' rankings over the candidates, we run fractional STV for each district in each given map. Fractional STV is STV as defined above. In each round, either a winner is selected if they have at least the Droop quota number of first-place votes, or a candidate with the least votes is eliminated. This candidate's votes are transferred, by eliminating this candidate from each voter's list. In fractional STV, votes are transferred as follows. Formally, suppose the winning candidate receives $v > Q$ first-place votes. Then, a fraction $\frac{v - Q}{v}$ of this candidate's votes is in excess of what is needed to win. Thus, each voter for this candidate has their weight multiplied by $\frac{v - Q}{v}$. For example, suppose the Droop quota was 5, and a candidate received 6.3 first-place votes (fractional votes are possible due to earlier round transfers). Then, each of their voters has their weight multiplied by $\frac{1.3}{6.3}$.

We simulate STV for the random, neutral maps calculated above, as we wish to study intra-party effects as distinct from partisan ones.
Running these STV elections utilized over 60 CPU-weeks of compute, on top of the map generation discussed in \Cref{sec:empiricalmethod}. This runtime is for \textit{given} maps, underscoring the necessity of \Cref{prop:stvequalspav} to study partisan seat shares without needing to simulate STV during the redistricting optimization process.

\begin{figure}
	\centering
	\begin{subfigure}[b]{0.48\textwidth}
		\centering
		\includegraphics[width=\textwidth]{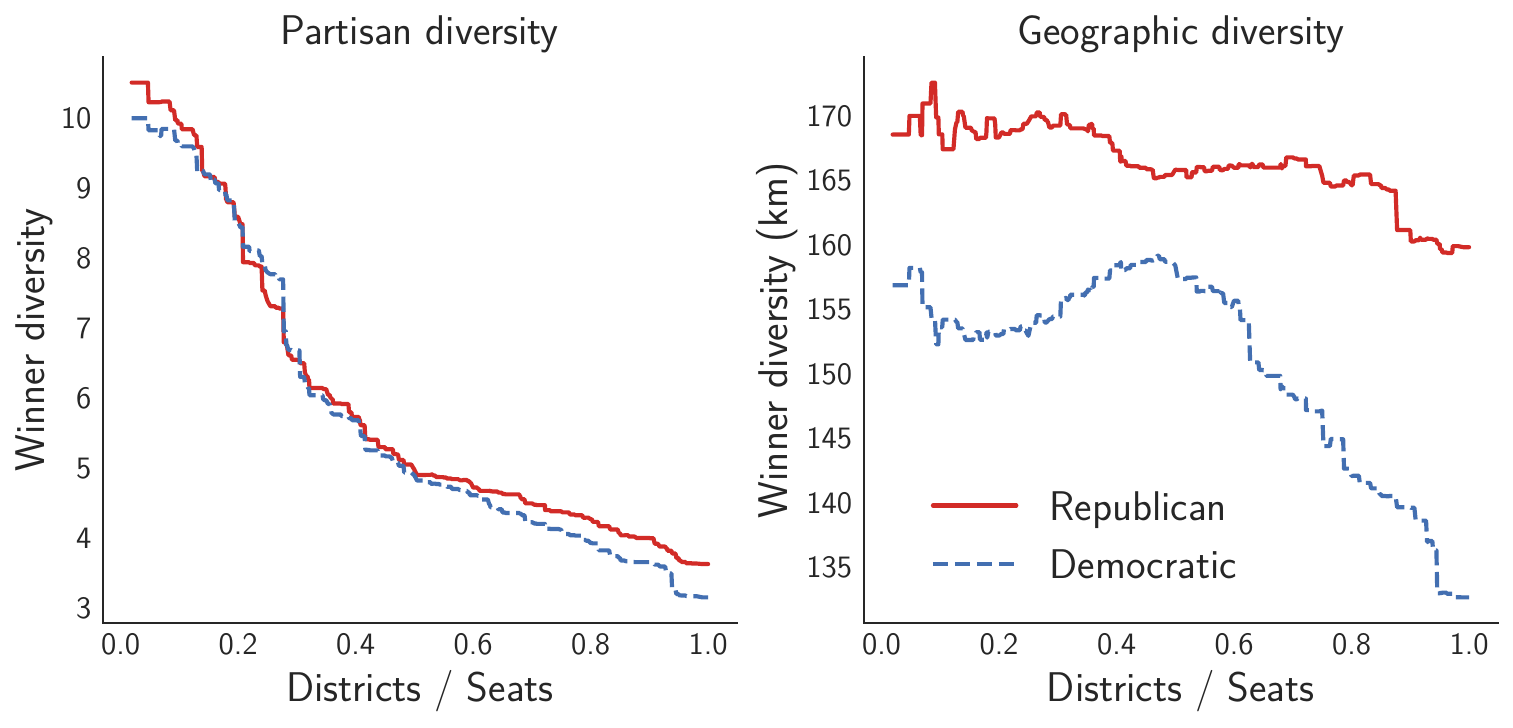}
		\caption{Diversity of winning candidate set}
		\label{fig:winnerdiversity}
	\end{subfigure}
	\hfill
	\begin{subfigure}[b]{0.48\textwidth}
	\centering
	\includegraphics[width=\textwidth]{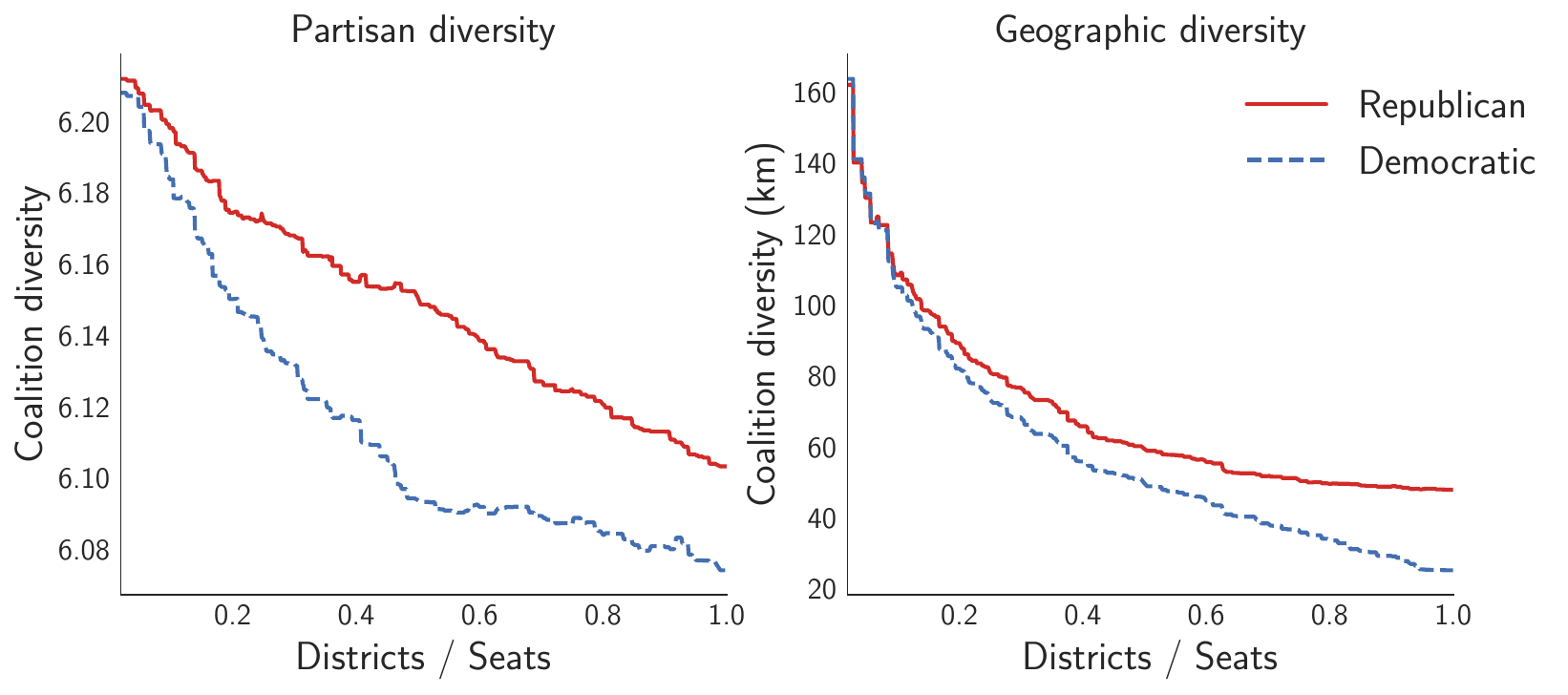}
	\caption{Diversity of coalitions supporting winners}
	\label{fig:coalitiondiversity}
\end{subfigure}
	\caption{How intra-party diversity measures vary with the number of districts. (a) The diversity of the winning candidate set.  (b) The diversity of the \textit{voters} who supported a given winning candidate, averaged by party. These results establish that, with STV, a more diverse set of winners can be elected, i.e., minority viewpoints \textit{within} a party are supported. Simultaneously, each winning candidate draws support from a more diverse coalition of voters. Republicans (red line) tend to be more geographically spread out than Democrats (blue line), and so have a higher level of geographic diversity as measured at every district size. Republicans having higher partisan diversity coalitions may be related to the same phenomenon, since coalitions to gather winning votes may also be more spread out geographically and for partisan scores.}
	\label{fig:intra}
\end{figure}

\begin{figure}
	\centering
	\begin{subfigure}[b]{0.48\textwidth}
		\centering
		\includegraphics[width=\textwidth]{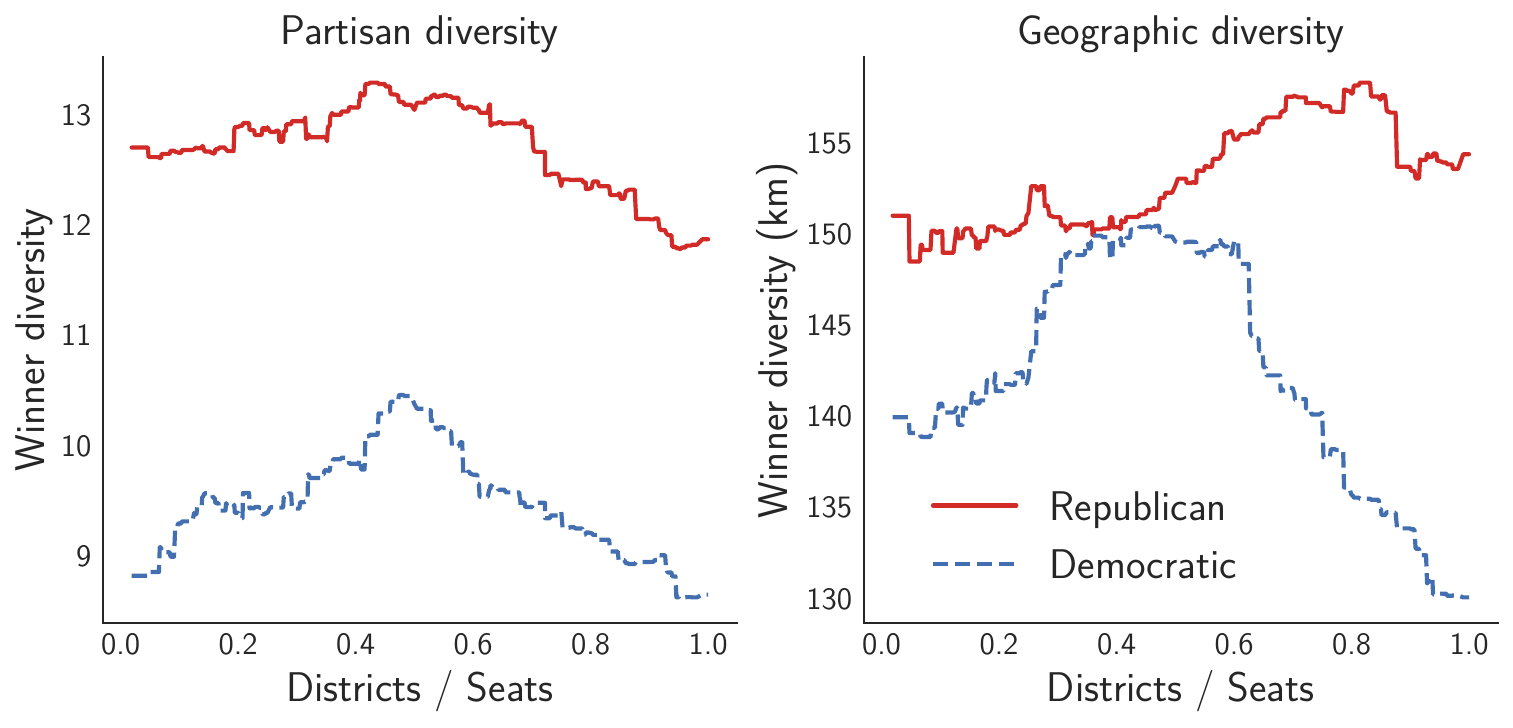}
		\caption{Diversity of winning candidate set}
		\label{fig:winnerdiversitygeog}
	\end{subfigure}
	\hfill
	\begin{subfigure}[b]{0.48\textwidth}
	\centering
	\includegraphics[width=\textwidth]{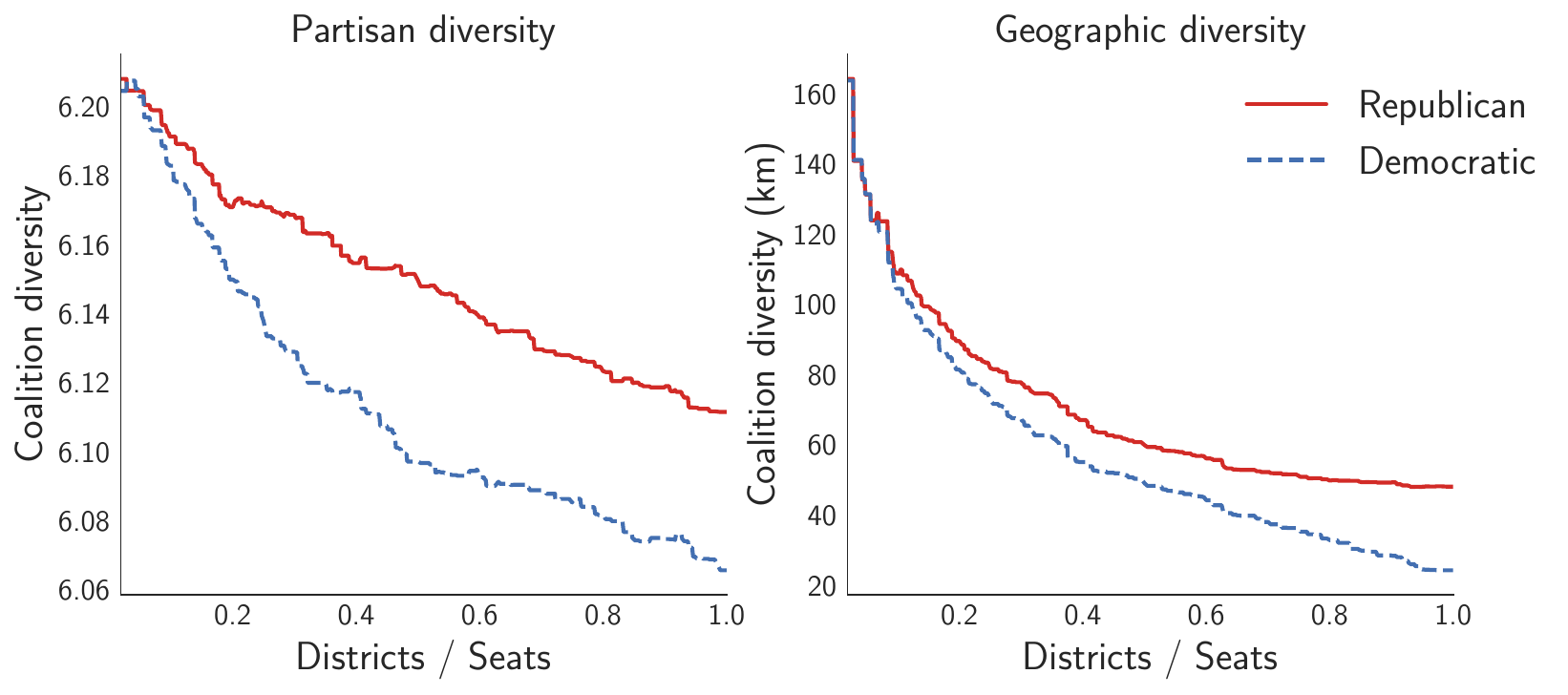}
	\caption{Diversity of coalitions supporting winners}
	\label{fig:coalitiondiversitygeog}
\end{subfigure}
	\caption{Same as \Cref{fig:intra}, except voters now rank within party based on geographic distance.}
	\label{fig:intrageo}
\end{figure}

\subsection{Results}
\label{sec:intraresults}

\Cref{fig:intra} contains our intra-party results, when we assume that within a party voters rank according to partisan score. First, \Cref{fig:winnerdiversity} illustrates how MMDs affect the diversity of the winning set within each party. For each map, we determine the set of winners for each party and then calculate the standard deviation of their partisan scores. For geographic diversity, we calculate the average Euclidean distance of each winner from the centroid of the winners' locations. We find that with STV and large districts, a more diverse set of winners emerges, within each party. The intuition for partisan score is simple and is similar to that regarding the diffusion of minority party voters across a state (the Massachusetts problem). When voters rank according to partisan score, then similar voters across the state can pool their votes to ensure that a favored candidate wins. Using SMDs, these voters may be split across different districts, such that in each a different intra-party coalition chooses the winning candidate. Surprisingly, larger districts simultaneously increase the geographic diversity of winners, even when voters rank according to partisan score. These distance functions are defined in more detail in \Cref{sec:intramethoddetails}.

Second, \Cref{fig:coalitiondiversity} shows that with MMDs each winning candidate draws support from a more diverse coalition of voters. For each winner, we determine the voters who contributed votes in the STV round in which the candidate was elected, weighted by how many votes they provided (since we do fractional STV). For partisan score, we calculate the (weighted) standard deviation of the partisan scores of the voter coalition. For geographic diversity, we calculate the (weighted) average voter Euclidean distance from the district centroid. Finally, we average across winners within each party.

This result establishes that MMDs may come at a cost, in terms of the \textit{geographic representation} aspect of our representative democracy: insofar as it is valuable for a representative to be accountable to a geographically cohesive set of voters (such as for providing constituent services or acquiring funding for projects), large MMDs weaken such ties; voter coalitions move from about 25 kilometers from the coalition center on average to almost 160 kilometers. Two- or three-member districts, however, come at a far smaller cost.

Appendix \Cref{fig:intrageo} contains the same plots when voters' intra-party rank is according to geography. Perhaps as expected, under this assumption MMDs do not increase the diversity of the winners according to partisan scores -- if voters do not band together based on partisan scores, then their mutually preferred candidates on this dimension do not win. However, the findings regarding \textit{coalition} diversity remain virtually identical: even when voters rank within a party based on geographical distance to a candidate, each winner represents a far more geographically dispersed coalition than with SMDs.

While this finding may seem paradoxical, it can occur when voters for one party are not evenly distributed throughout the state -- whereas voters in a (smaller) city may be large enough under SMDs to elect a preferred candidate, in MMDs they may be overruled by a larger group of same-party voters in another city. This finding adds a warning to the notion of proportionality---small, potentially intersectionally defined groups may be better served by small districts, even as theoretical proportionality guarantees a monotonic increase in district size. Most practically for the United States, more research is needed on the effect of MMDs on minority voters (cf. \citet{benadeRankedChoiceVoting2021}).

Overall, our results caution in choosing too large a district size for MMDs, especially as our results in \Cref{sec:proportionality} suggest that most of the proportionality benefits can be achieved with two- or three-member districts---and results in this section suggest that such districts come at a smaller cost in terms of geographic cohesiveness.
More broadly, this section gives an approach to evaluating the effects of voting rules and redistricting on intra-party outcomes, by generating various voter rankings from individual-level data and then studying the effects with respect to interpretable dimensions.

\subsection{Intra-party method details}
\label{sec:intramethoddetails}

Here, we provide additional detail for \Cref{sec:intra}.

\paragraph{Voter multi-dimensional preferences.} In the voter file provided to us, each voter is assigned scores on a variety of ideological and demographic dimensions, including party preference and preferences on specific issues. See the following blog post for additional details on these scores, under ``Survey-based machine learning support scores'': \url{https://predictwise.medium.com/what-all-we-did-for-the-2020-elections-78b66a44f651}. Here we use the \textit{partisan score} value, which is an overall score from 0 to 100 designed to represent how likely a voter is to lean \textit{Democrat} in the two-party vote; this score is derived from party registration, past primary voting data, demographic information, etc. As the post shows, this score correlates well with issue-preference scores. We threshold this value; above the threshold, a voter is declared a \textit{Democrat} and below the threshold, a \textit{Republican}.

As a second dimension, we use \textit{geography} -- based on the voter's home location; for computational simplicity and privacy reasons, we use the centroid of the census tract in which the voter's home location resides, as opposed to the latitude/longitude information of the voter.

\paragraph{Candidate generation.} Next, we generate one candidate for each (party, census tract) combination as follows. (Results do not change with multiple candidates for each (party, census tract) combination). The home address of the candidate is taken to be the centroid of the census tract. The \textit{partisan score} of the candidate is taken to be the median partisan score of voters in the corresponding party in that tract.

\paragraph{Voters and candidates in an election.} Recall -- from the map generation stage -- that a given \textit{map} is a partition of the census tracts into districts. Note that each voter and candidate belongs to a census tract, and thus a unique district in each map.

In a given election (as characterized by the map, social choice function, and number of winners in each district), each voter has available to them a subset of the candidates (those belonging to a census tract in the same district as the voter). The method to generate candidates might produce many candidates or voters in a given district (e.g., if the district makes up the entire state); for computational reasons, we sample at most 5{,}000 voters and 1{,}000 candidates from each district.

\paragraph{Turning voter preferences into rankings.} In each simulated election, we need a ranking over candidates from each voter. These are generated as follows. First, as in the assumption of \Cref{prop:stvequalspav}, we assume that each voter ranks their own party candidates over all other-party candidates. Within each party, voters rank candidates according to a \textit{distance function}. As introduced in \Cref{sec:intra}, for some settings we assume that the distance function is first determined by the \textit{partisan score} -- voters rank candidates according to the absolute value of the difference of their partisan scores, with ties if any broken by geography. In other settings, we assume that the distance function is determined first by \textit{geography}, with voters preferring closer candidates according to the Euclidean distance between the latitude/longitude associated with each voter and candidate (and ties broken by partisan score). Unlike reality, we assume that there is no \textit{ballot exhaustion} -- these simulated voters submit complete rankings.

\paragraph{Simulating STV elections.} From the above steps, for each election in each district, we have a set of up to 5{,}000 voters and 1{,}000 candidates who have submitted rankings over the candidates. We run fractional STV for each such district.

This entire process took over 60 CPU-weeks to compute for all settings, with the largest computational load being generating the individual voter rankings and then running fractional STV.

\clearpage
\section{Proofs}

\begin{lemma}
\label{lem:pavstv}
Let $y_R \in (0, 1]$. Consider
\begin{align*}
n_R(y_R, \lambda_{PAV})
&= \min_n \left[ \argmax_n \left[
y_R \sum_{i=1}^n \lambda(i) \right.\right. \\
&\qquad\left.\left.
+ (1-y_R)\sum_{i = 1}^{M - n} \lambda(i)
\right]\right].
\end{align*}

and let $\ell$ be the unique integer such that  \[
y_R(M+1) - 1 \leq \ell < y_R(M+1)
\]

Then, $n_R(y_R, \lambda_\text{PAV}) = \ell$, for all $y_R, M$.
\end{lemma}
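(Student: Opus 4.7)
The plan is to work directly with the one-dimensional objective $f(n) := y_R \sum_{i=1}^n \tfrac{1}{i} + (1-y_R)\sum_{i=1}^{M-n} \tfrac{1}{i}$ whose argmax defines $n_R(y_R, \lambda_{PAV})$, pin down its maximizer(s) exactly, and then check that the answer coincides with the unique integer in $[y_R(M+1)-1,\, y_R(M+1))$ that defines $n_R(y_R, STV)$.

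First I would compute the forward difference $f(n+1) - f(n) = \tfrac{y_R}{n+1} - \tfrac{1-y_R}{M-n}$. A couple lines of algebra show $f(n+1) - f(n) > 0$ iff $y_R(M+1) > n+1$, with equality iff $y_R(M+1) = n+1$, and strictly negative otherwise. This establishes that $f$ is strictly unimodal on $\{0,1,\dots,M\}$: it strictly increases while $n+1 < y_R(M+1)$ and strictly decreases while $n+1 > y_R(M+1)$.

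Next I would split into cases according to whether $y_R(M+1)$ is an integer. If $y_R(M+1) \notin \mathbb{Z}$, then the forward difference is never zero, so $f$ has a unique argmax at $n^\star = \lfloor y_R(M+1)\rfloor$. This is the unique integer strictly between $y_R(M+1)-1$ and $y_R(M+1)$, hence equals $n_R(y_R, STV)$. If $y_R(M+1) = k \in \mathbb{Z}$, then $f(k) = f(k-1)$, so both $k-1$ and $k$ are argmaxes; the outer $\min$ selects $n^\star = k - 1$, which is the unique integer in the half-open interval $[k-1, k)$, again matching $n_R(y_R, STV)$. A brief sanity check that the half-open interval $[y_R(M+1)-1,\,y_R(M+1))$ always contains exactly one integer (length $1$, left-closed, right-open) ties things together.

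The main obstacle I anticipate is the integer edge case $y_R(M+1) \in \mathbb{Z}$: the PAV objective has a genuine tie between two consecutive integer values of $n$, and one must verify that the $\min$ tiebreaking convention in the definition of $n_R(y_R, \lambda_{PAV})$ and the half-open convention $[\,\cdot\,, \,\cdot\,)$ in the definition of $n_R(y_R, STV)$ agree on selecting the smaller integer. Once the unimodality of $f$ and the sign of the forward difference are in hand, the rest of the argument is bookkeeping across the two cases.
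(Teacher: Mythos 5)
Your proposal is correct and follows essentially the same route as the paper's proof: the forward difference $\tfrac{y_R}{n+1}-\tfrac{1-y_R}{M-n}$ and its sign condition $n+1 \lessgtr y_R(M+1)$ are exactly the paper's first-order optimality conditions after the same algebraic rearrangement, and your tie case $y_R(M+1)\in\mathbb{Z}$ matches the paper's strict-inequality handling of the $\min$ tie-break. If anything, you are slightly more careful than the paper in explicitly establishing unimodality of $f$, which the paper leaves implicit when it treats the local optimality conditions as characterizing the global argmax.
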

\begin{proof}

We have that $n_R(y_R, \lambda_{PAV})$ is $n$ such that:
\begin{align*}
    n \geq 1 \implies y_R \lambda(n) > (1 - y_R)\lambda(M - n + 1)\\
    n < M \implies y_R\lambda(n+1) \leq  (1 - y_R)\lambda(M - n)
\end{align*}
The first condition requires that choosing the $n$th candidate from party \textit{R} is strictly (due to tie-breaking against \textit{R}) more valuable than choosing the ($M-n + 1$)th candidate from party \textit{D}.

The second condition requires that choosing the ($n+1$)th candidate from party \textit{R} is no more valuable than choosing the ($M-n$)th candidate from party \textit{D}.

Rewriting the first condition, we have
\begin{align*}
    n \geq 1 \implies \frac{y_R}{n} &> \frac{1 - y_R}{M - n + 1}\\
    \iff \frac{y_R}{n} + \frac{y_R}{M - n + 1} &> \frac{1 }{M - n + 1}\\
    \iff \frac{y_R(M+1)}{n(M - n + 1)}  &> \frac{1 }{M - n + 1}\\
    \iff \frac{y_R}{n}  &> \frac{1 }{(M+1)}
\end{align*}
Similarly, rewriting the second condition yields
\begin{align*}
    n < M \implies \frac{y_R}{n+1}  &\leq \frac{1 }{(M+1)}
\end{align*}

\end{proof}

\propSTVclosed*
\begin{proof}
The first part of the result is known from \citet{tideman1995single} and \citet{dummett1984voting}. For completeness, we provide a proof using our notation and in our exact setting.

For expositional simplicity, we assume that $V \text{ mod } (M+1) = 0$, i.e., the number of votes is evenly divisible by one plus the number of winners, though the same arguments extend. Then, let $Q = \floor*{\frac{V}{M + 1}}+1 = \frac{V}{M + 1}+1$ be the Droop quota.

To understand the intuition, note that the given seat shares would immediately hold if each party had a coordinator who could optimally decide how voters of that party rank candidates within the party. In that case, the coordinator would ensure that as many candidates as possible have first-place votes equal to the Droop quota, with no first-place votes going to candidates who will be eliminated. The resulting seat shares follow, given tie-breaking against party \textit{R} (and, if needed, the surplus vote transfer procedure for the additional single vote needed in the Droop quota).

However, this argument is not sufficient because, in principle, candidates could receive meaningful votes from voters of the other party. Further, without such a coordinator, a sub-optimal arrangement of votes could potentially lead to the elimination of candidates who would be elected with such a coordinator. (The former reason is fundamental, and is why the exact formula does not hold in general without parties or even for more than two parties without further assumptions; the latter possibility is exactly the issue that STV surplus vote transfers are designed to avoid, and eliminating it is bookkeeping). The proof centers around eliminating these two possibilities.

The key step in the proof is noting that, under the assumptions, votes can only be transferred from a candidate of one party to a candidate of the other party (either after a candidate is eliminated or selected as a winner) if candidates from the sending party have been exhausted, as the sending party voters rank all other party members after all their own party members. %

Thus, for any set of voter rankings under the assumptions, the per-party seat share remains the same under re-arrangements of how each voter ranks members of the \textit{other} party (at the point that such rearrangements matter for who is elected, only one party's candidates are left, and so the partisan seat share does not change). 
Thus, without loss of generality, for the rest of the proof we assume that all voters within a party share the same ordering for candidates of the \textit{other} party.  

  Recall that the Droop quota is designed such that exactly $M$ total candidates meet the Droop quota across rounds, if all voters submit full rankings and $V \geq M(M+1)$. (For simplicity and without loss of generality, we assume that the election continues even when the number of candidates remaining equals the number of seats to be filled, so that they each meet the quota after transfers).

  More than $M$ candidates reaching the quota would require at least $(M+1)Q = V + M+1 > V$ total votes (as votes necessary to reach the quota are never transferred).
  
  Similarly, we have that at least $M$ candidates reach the Droop quota after transfers. There are enough votes, optimally spread, for enough candidates to reach the quota:  
  \begin{align*}
      MQ \leq V 
      &\iff M \left[\frac{V}{M + 1} \right] + M \leq V \\
      &\iff \left[\frac{M}{M + 1} - \frac{M+1}{M + 1} \right] V \leq -M \\
      &\iff -\left[\frac{1}{M + 1}\right] V \leq -M \\
      &\iff V \geq M(M+1)
  \end{align*}
  Note that the above argument holds at every round. Once we've elected $W$ candidates, we need to elect $M - W$ more, and there are $V - WQ$ first-place votes left:
  \begin{align*}
    (M - W) Q &= MQ - WQ \leq V - WQ
  \end{align*}
  and so by the pigeonhole principle, when there are $M-W$ candidates left (possibly after eliminating some), there is at least one candidate with at least as many votes as the Droop quota.

The above facts establish that we can carry out the initial rounds of STV separately for each party, until in each party there are either no candidates remaining or there is one remaining candidate with less than $Q$ votes: no winners have been selected with votes transferred across parties up to this stage; as long as we only elect candidates with votes at or above the Droop quota, we do not mistakenly elect any candidate separately that we would not have together; and if this stage is ever reached, the \textit{identities} of the elected or eliminated candidates do not matter, since up to now there is a conservation of votes by party and so the per-party counts remain the same.

Now, consider each party $p$, and suppose in the current round that $W_p$ candidates have been elected for the party. Recall that $y_pV$ is the number of voters for party $p$. 

Further suppose that there are $m_p \geq 1$ candidates left for the party, and they collectively have $q_p = y_pV - W_pQ$ first-place votes (either original first-place votes, or votes after transfers) among them. We repeat the above argument. If $q_p\geq Qm_p$, then by the pigeonhole principle, there exists at least one candidate of the party such that the votes for that candidate meet the Droop quota $Q$. Then, that candidate is declared a winner, and its surplus votes are transferred to other candidates of the same party. Then, iterate with $W_p + 1$, $m_p - 1$, and $q_p - Q$. Otherwise, eliminate the candidate in the party with the least number of votes, and iterate with $m_p -1$ and $q_p$ until $m_p=0$ or $m_p=1$ with the remaining candidate having less than $Q$ votes.

Suppose at the end of these separate processes, $W_p$ candidates from each party $p$ have been elected. We know that there is at most one candidate from each party $p$ remaining, with $y_pV - W_pQ$ votes. 

Then
\begin{align*}
    n_p(y_p, STV) \geq W_p &\geq \floor{\frac{y_pV}{Q}} \\
    &= \floor{\frac{y_pV}{\frac{V}{M + 1}+1}} \\
    &= \floor{\frac{y_p(M+1)}{1+\frac{M + 1}{V}}} \\
    & \geq \floor{\frac{y_p(M+1)}{1+\frac{M + 1}{M(M+1)}}} & V \geq M(M+1) \\
    &= \floor{y_p M} \\
\end{align*}
Since this holds for both parties simultaneously, we have
\begin{align*}
    \floor{y_R M} &\leq n_R(y_R, STV) \\
    &\leq M - \floor{M - My_R} = \ceil{y_RM}.
\end{align*}
Whether $n_R(y_R, STV)$ is at the ceiling or floor depends on whether it has the majority of the remaining votes when each party has at most one candidate remaining.

The proof is finished by applying \Cref{lem:pavstv}. %

\end{proof}

\section{Empirical methodological detail}
\subsection{Map Generation}
All details of the original Stochastic Hierarchical Partitioning (SHP) algorithm can be found in the paper by \citet{gurnee2021fairmandering} and associated appendices; we further use their same geographic and electoral data (see Appendix Table 2 of \citet{gurnee2021fairmandering}). Here, we discuss the relevant algorithmic parameters and differences from that algorithm.

\parbold{Multi-member.}
The main algorithmic difference from the original SHP algorithm is that we adapted ours to generate multi-member districts. To do this, instead of parameterizing a sample tree node by a region $R$ and total number of seats $s$, we needed to also specify the number of districts that node contains. This is required because at an intermediate node, the number of districts is not immediately derivable from the total number of seats in that node because of ambiguity in the number of $N_k$ versus $N_k+1$ sized districts. Therefore, the number of seats is used just to balance population, and the number of districts is used for all other tree operations (sampling valid splits, maintaining balance, etc.).

\parbold{Ensembles.}
For each pair (state, $K$), with $K \in \{2, 3, \dots, 10, 12, \dots, 20, 23, \dots 53\} \cup \{N\}$ and $K \leq N$, we sampled the root node $\left(\frac{1000}{K}\right)^{1.2}$ times and each internal node $\left(\frac{300}{K}\right)^{0.5}$ times. These constants were chosen to balance computational cost and optimization quality. We used random-iterative center selection with Voronoi-weighted capacity matching to sample region centers and sizes. All districts are population balanced within a $1\%$ tolerance. Each of these ensembles was then scored, optimized, and subsampled to create a final distribution over partisan outcomes.

\end{document}